\documentclass[10pt,twocolumn,twoside]{IEEEtran} 

\IEEEoverridecommandlockouts                              %

\usepackage{xr}

\newcommand{\omitted}[1]{}%

\title{\fontsize{21}{21} \selectfont 
Asynchronous Distributed Bandit Submodular Maximization under Heterogeneous Communication Delays
}
\author{Pranjal Sharma, Zirui Xu, Vasileios Tzoumas$^\dagger$
	\thanks{%
 $^\dagger$Department of Aerospace Engineering, University of Michigan, Ann Arbor, MI 48109, USA  {\tt\footnotesize \{spranjal,ziruixu,vtzoumas\}@umich.edu}} 
    \thanks{This work was supported by NSF CAREER Award No. 2337412 and ARO Early Career Program award W911NF-25-1-0280.}
}

\usepackage{cite}

\usepackage{comment}
\usepackage{siunitx}
\usepackage{relsize}
\usepackage{ifthen}
\usepackage[colorinlistoftodos]{todonotes}

\usepackage[vlined,ruled,linesnumbered]{algorithm2e}
\usepackage{graphics} %
\usepackage{rotating}
\usepackage{color}
\usepackage{enumerate}
\usepackage[T1]{fontenc}
\usepackage{psfrag}
\usepackage{epsfig} %
\usepackage{booktabs}
\usepackage{graphicx,url}
\usepackage{multirow}
\usepackage{array}
\usepackage{latexsym}
\usepackage{amsfonts}
\usepackage{amsmath}
\usepackage{amssymb}
\usepackage{xstring}
\usepackage{algorithmic}
\usepackage{multirow}
\usepackage{xcolor}
\usepackage{prettyref}
\usepackage{flexisym}
\usepackage{bigdelim}
\usepackage{breqn} %
\usepackage{listings}
\let\labelindent\relax
\usepackage{xspace}
\usepackage{bm}
\graphicspath{{./figures/}}
\usepackage{tikz}
\usetikzlibrary{matrix,calc}
\usepackage{lipsum}
\usepackage{mdwlist}

\let\itemize\undefined
\makecompactlist{itemize}{stditemize}
\usepackage{enumitem}
\usepackage{caption}
\usepackage{subcaption}

\usepackage{amsthm}
\usepackage{mathtools}
\usepackage{mleftright}

\makeatletter
\let\NAT@parse\undefined
\makeatother
\usepackage{hyperref}
\usepackage{cleveref}
\allowdisplaybreaks

\hypersetup{%
  colorlinks=true,
  linkcolor=blue,
  filecolor=magenta,      
  urlcolor=black,
  citecolor=red,
  linkbordercolor={0 0 1}
}

\newtheorem{theorem}{Theorem}
\newtheorem{problem}{Problem}

\newtheorem{lemma}{Lemma}
\newtheorem{assumption}{Assumption}
\newtheorem{definition}{Definition}

\newtheorem{remark}{Remark}

\newcommand{\bdmath}{\begin{dmath}}
\newcommand{\edmath}{\end{dmath}}
\newcommand{\beq}{\begin{equation}}
\newcommand{\eeq}{\end{equation}}
\newcommand{\bdm}{\begin{displaymath}}
\newcommand{\edm}{\end{displaymath}}
\newcommand{\bea}{\begin{eqnarray}}
\newcommand{\eea}{\end{eqnarray}}
\newcommand{\beal}{\beq \begin{array}{lll}}
\newcommand{\eeal}{\end{array} \eeq}
\newcommand{\beas}{\begin{eqnarray*}}
\newcommand{\eeas}{\end{eqnarray*}}
\newcommand{\ba}{\begin{array}}
\newcommand{\ea}{\end{array}}
\newcommand{\bit}{\begin{itemize}}
\newcommand{\eit}{\end{itemize}}
\newcommand{\ben}{\begin{enumerate}}
\newcommand{\een}{\end{enumerate}}

\newcommand{\calA}{{\cal A}}
\newcommand{\calB}{{\cal B}}
\newcommand{\calC}{{\cal C}}

\newcommand{\calE}{{\cal E}}

\newcommand{\calG}{{\cal G}}

\newcommand{\calN}{{\cal N}}

\newcommand{\calV}{{\cal V}}

\definecolor{myblue}{RGB}{65 105 225}

\newcommand{\hide}[1]{}

\newcommand{\hiddenText}{{\color{gray} hidden text.}}
\newcommand{\hideWithText}[1]{\hiddenText}

\newcommand{\opt}{^{\star}}

\DeclareRobustCommand{\scenario}[1]{%
  \ifmmode
    \mathsf{#1}%
  \else
    \texorpdfstring{{\fontsize{8.9}{9}\selectfont\sffamily #1}\xspace}{#1}%
  \fi
}

\newcommand{\ie}{{i.e.},\xspace}
\newcommand{\eg}{{e.g.},\xspace}
\newcommand{\myin}{\, \in \,}

\newcommand{\sg}{\scenario{SG}}
\newcommand{\banalg}{\scenario{BSG}}

\newcommand{\myParagraph}[1]{{\bf #1.}\xspace}

\renewcommand{\opt}{\scenario{OPT}}

\newcommand{\curv}{\kappa}
\newcommand{\ourcurv}{\scenario{coin}}

\newcommand{\alg}{\scenario{DOG-IU}}
\newcommand{\dog}{\scenario{DOG}}

\newcommand{\elem}{v}

\newcommand{\solopt}{\calA^{\opt}}

\newcommand{\Reg}{\operatorname{Reg}_{T}}

\PassOptionsToPackage{end}{algorithmic}

\crefname{assumption}{assumption}{Assumptions}

\IEEEaftertitletext{\vspace{-1\baselineskip}}

\begin{document}

\maketitle

\thispagestyle{empty}
\pagestyle{empty}

\begin{abstract}

We study asynchronous distributed decision-making for scalable multi-agent bandit submodular maximization.  We are motivated by distributed information-gathering tasks in unknown environments and under heterogeneous inter-agent communication delays. To enable scalability despite limited communication delays, existing approaches restrict each agent to coordinate only with its one-hop neighbors. But these approaches assume homogeneous communication delays among the agents and a synchronous global clock. In practice, however, delays are heterogeneous, and agents operate with mismatched local clocks. That is, each agent does not receive information from all neighbors at the same time, compromising decision-making. In this paper, we provide an asynchronous coordination algorithm to overcome the challenges. We establish a provable approximation guarantee against the optimal synchronized centralized solution, where the suboptimality gap explicitly depends on communication delays and clock mismatches. The bounds also depend on the topology of each neighborhood, capturing the effect of distributed decision-making via one-hop-neighborhood messages only. The clock-mismatch penalty is expressed through a bounded oscillation of the reward in the execution times, which, unlike a Lipschitz condition, is available for discrete-valued rewards such as target monitoring. We validate the approach through numerical simulations on multi-camera target monitoring through the lenses of two different reward structures and simulation setups.

\end{abstract}

\vspace{-3.3mm}
\section{Introduction}\label{sec:Intro}
Multi-agent systems of the future will increasingly rely on agent-to-agent communication to coordinate tasks such as target tracking~\cite{xu2023bandit}, environmental mapping~\cite{atanasov2015decentralized}, and area monitoring~\cite{corah2018distributed}. These tasks are often modeled as 
\begin{equation}\label{eq:intro}
    \vspace{-.5mm}
	\max_{a_{i,t}\,\in\,\mathcal{V}_i,\,  \forall\, i\,\in\, \calN}\
	f_t(\,\{a_{i,t}\}_{i\myin \calN}\,), \;\;\;t=1,2,\dots,\vspace{-.5mm}
\end{equation}
across the robotics, control, and machine learning communities, where $\calN$ denotes the set of agents, $a_{i,t}$ denotes agent $i$'s chosen action at time $t$, $\calV_i$ denotes agent $i$'s set of available actions, and $f_t\colon 2^{\prod_{i \in \calN}\calV_i}\mapsto\mathbb{R}$ denotes the objective function that captures the task utility (global objective)~\cite{krause2008near,singh2009efficient,tokekar2014multi,atanasov2015decentralized,gharesifard2017distributed,marden2017role,grimsman2019impact,corah2018distributed,schlotfeldt2021resilient,du2022jacobi,rezazadeh2023distributed,robey2021optimal,xu2025communication}. In resource allocation and information gathering applications, $f_t$ is \textit{submodular}~\cite{fisher1978analysis},  a diminishing-returns property~\cite{krause2008near}. For example, in target monitoring with multiple reorientable cameras, $\calN$ is the set of cameras, $\calV_i$ represents the possible orientations of each camera, and $f_t$ measures the number of distinct targets observed within the joint field of view.

The optimization problem in~\cref{eq:intro} is NP-hard~\cite{Feige:1998:TLN:285055.285059}, but polynomial-time algorithms with provable approximation guarantees exist when the $f_t$ is submodular. A classical example is the \emph{Sequential Greedy} (\sg) algorithm~\cite{fisher1978analysis}, which guarantees a $1/2$-approximation ratio. 
\sg and its variants have been widely adopted in the controls, machine learning, and robotics literature~\cite{krause2008near,singh2009efficient,tokekar2014multi,atanasov2015decentralized,gharesifard2017distributed,grimsman2019impact,corah2018distributed,schlotfeldt2021resilient,liu2021distributed,robey2021optimal,rezazadeh2023distributed,konda2022execution,krause2012submodular,xu2025communication}.


In this paper, we consider settings where the dynamics of the environment are unknown and partially observable. This requires agents to optimize actions based on retrospective rewards only (bandit optimization~\cite{lattimore2020bandit}). For example, in target tracking with unknown target motion~\cite{sun2020gaussian}, agents cannot evaluate $f_t$ in advance and instead rely on bandit feedback~\cite{lattimore2020bandit}, observing only the rewards of executed actions. This severely limits information reuse and complicates coordination. To address this, prior work extends sequential greedy to the bandit setting~\cite{xu2023online,xu2023bandit}, leveraging tools from online learning such as tracking the best expert (e.g., \scenario{EXP3-SIX}~\cite{neu2015explore}) to obtain suboptimality guarantees relative to time-varying optimal actions in hindsight.

However, the approaches above, similar to their offline counterparts~\cite{krause2008near,singh2009efficient,tokekar2014multi,atanasov2015decentralized,gharesifard2017distributed,grimsman2019impact,corah2018distributed,schlotfeldt2021resilient,liu2021distributed,robey2021optimal,rezazadeh2023distributed,konda2022execution,krause2012submodular}, where $f_t$ is assumed known \textit{a priori}, rely on sequential multi-hop communication over connected networks, leading to prohibitive delays under realistic communication constraints~\cite{xu2025communication}. Specifically, their communication complexity scales quadratically or cubically with the number of agents, and convergence typically requires a quadratic number of decision rounds. For instance, Bandit Sequential Greedy (\banalg)~\cite{xu2023bandit} incurs cubic communication per round and quadratic rounds to converge, resulting in quintic time complexity in the worst case~\cite[Theorem~6]{xu2026self}. To improve scalability, recent distributed approaches restrict coordination to one-hop neighbors and operate over arbitrary network topologies, achieving linear-time scaling. For example, Resource-Aware distributed Greedy (\scenario{RAG})~\cite{xu2025communication} matches centralized performance offline under full connectivity but incurs topology-dependent suboptimality otherwise. \cite{xu2026self} extends \scenario{RAG} to the online setting and actively designs each agent's communication neighborhood to maximize the overall optimization performance. Moreover, multi-hop communication is leveraged in~\cite{xu2026distributed} such that the coordination performance can be improved without sacrificing much decision speed. 

But all works above make two key assumptions: (i) they assume homogeneous one-hop communication delays among all agents, and (ii) they assume synchronized global clocks for all agents. These assumptions are crucial in enabling both the algorithms and theoretical guarantees for the prior works. But in practice, delays are generally heterogeneous across neighbors because of nonuniform communication hardware and local channel conditions; hence, information arrives at different times and decisions may have to be made before all information arrives. Moreover, the agents' local clocks are generally mismatched, and the multi-agent system cannot reliably maintain strict global synchronization. After incorporating these two limitations, the following research question arises:
\textit{How does each agent perform scalable coordination with others using partial-neighborhood information and under asynchronous local clocks?}

\myParagraph{Contributions}
We provide a distributed multi-agent decision-making framework that enables near-optimal action coordination in unknown environments under heterogeneous communication delays and asynchronous local clocks. 
Our approach leverages heterogeneous delays to allow each agent to incorporate partial neighborhood information as it arrives, allowing agents to learn near-optimal actions and adapt to dynamic environments faster. 
To this end, we develop tools for \textit{adversarial bandit with delayed feedback and asynchronous distributed submodular maximization}. The approach is fully distributed: each agent has its own pace of action selection under asynchronous local clocks.
We verify the algorithm's performance through multi-camera target-tracking simulations, showing that it increasingly outperforms the baseline as delays increase. The algorithm has the following properties:

\setcounter{paragraph}{0}
	
\paragraph{Approximation Performance}
The algorithm enjoys a suboptimality bound against the optimal solution of \cref{eq:intro}. In the synchronous setting, the bound captures the suboptimality gap against the optimal synchronized centralized solution, where the gap explicitly depends on communication delays and the topology of each neighborhood, capturing the effect of distributed decision-making via one-hop-neighborhood messages only (\Cref{thm:approx_performance}). In the asynchronous setting, given a timing mismatch bound of $\rho$, these guarantees remain valid up to an additive mismatch term of order
$O(|\mathcal{N}|\,\widetilde{\mathrm{osc}}(\rho))$, where $\widetilde{\mathrm{osc}}(\rho)$ bounds how much the reward can oscillate when execution times are perturbed by at most $\rho$ (\Cref{thm:async}). The term is linear in the number of agents, vanishes with $\rho$ whenever the reward is continuous in the execution times, and otherwise approaches a floor set by the reward's discontinuities.

\paragraph{Convergence Rate} 
The algorithm enables the agents to achieve epsilon-convergence after $\tilde{O}\!\left({|\mathcal{N}|^2 |\bar{\mathcal V}| \bar{M}_t}/{\varepsilon^2}\right)$ rounds, assuming the delays are bounded due to sufficient communication bandwidth. 



\section{Distributed Online Submodular Maximization Under Heterogeneous Communication Delays}\label{sec:problem}

We present the problem formulation. To this end, we use the following notation:
\begin{itemize}[leftmargin=*]
    \item $\calV_\calN \triangleq \prod_{i\myin \calN} \,\calV_i$ is the cross product of sets $\{\calV_i\}_{i\myin \calN}$.
    \item $[T]\triangleq\{1,\dots,T\}$ for any positive integer $T$;
    \item $f(\,a\,|\,\calA\,)\triangleq f(\,\calA \cup \{a\}\,)-f(\,\calA\,)$ is the marginal gain of set function $f:2^\calV\mapsto \mathbb{R}$ for adding $a \in \calV$ to $\calA \subseteq\calV$.
    \item $|\calA|$ is the cardinality of a discrete set $\calA$. 
\end{itemize}

We also use the following framework about the agents' communication network and their global objective $f$.

\myParagraph{Communication network} 
The distributed communication network $\calG=\{\calN, \calE\}$ \textit{can be directed and even disconnected}, where $\calE$ is the set of communication channels. When $\calG$ is fully connected (all agents receive information from all others), we call it \textit{fully centralized}. In contrast, when $\calG$ is fully disconnected (all agents are isolated, receiving information from no other agent), we call it \textit{fully decentralized}.

\myParagraph{Communication neighborhood}  
When a communication channel exists from agent $j$ to $i$, \ie $(j\rightarrow i) \in \calE$, $i$ can receive, store, and process information from $j$. The set of all agents from which $i$ can receive information through one-hop communication is denoted by $\calN_i$, agent $i$'s \textit{neighborhood}. We assume $\calN_i$ to remain constant over $[T]$. Information originating from different neighbors $j\in\calN_i$ may take varying amounts of time to reach $i$, depending on the message size and communication data rate.

\textbf{Communication delay.} For information sent from agent $j$ to agent $i$ at round $t$, let $d_{i,t}^j$ denote the communication delay. These delays may vary across neighbors $j\in\mathcal N_i$ and across time, reflecting heterogeneous communication conditions. Hence, agent $i$ can evaluate the reward of its round-$t$ action only after receiving the required neighbor actions, i.e., after a delay of $\max_{j\in\mathcal N_i} d_{i,t}^j$. We also define an upper bound on the delays for agent $i$ as $\bar{d}_i \triangleq \max_t(\max_{j \in \mathcal{N}_i}(d^j_{i,t}))$ and an upper bound on delays throughout the network as $\bar{d} \triangleq \max_{t}(\max_{i \in \mathcal{N}}(\max_{j \in \mathcal{N}_i}d^{j}_{i,t}))$. To this end, we also assume sufficient bandwidth for each communication channels such that the delays are bounded instead of accumulating.

\textbf{Arrival of information.}
Since the delays $d_{i,t}^j$ may differ across $j\in\mathcal N_i$, the round-$t$ neighbor actions received by agent $i$ may arrive in $K$ batches, where $1\le K\le |\mathcal N_i|$. 
\begin{align}
R_{i,t}^{(k)} &:= \{j:\text{ $a_{j,t}$  info. has arrived by batch }k\},\\
R_{i,t}^{(0)} &:= \emptyset,\\
M_{i,t}^{(k)} &:= \mathcal N_i \setminus R_{i,t}^{(k)}, \quad k \in \{1,\ldots,K\}.
\end{align}

\textbf{Reward Estimation.}
The agents may build estimates of neighbors' missing actions $M^{(k)}_{i,t}$ based on the neighbors' past actions and states. This allows the agents to estimate each round's reward before the true value can be computed. In our simulations, all agents use the last known neighbor actions as estimates for missing actions. \Cref{thm:per_agent} also covers regret guarantees for worst case estimates, which is when the difference between estimated and true reward is the maximum. This is possible since we assume the reward function to be bounded: such a conservative bound is available based on knowledge of worst case dynamics of the agents and the environment, a typical assumption for bandit learning \cite{thune2019nonstochastic}.

\begin{definition}[Normalized and Non-Decreasing Submodular Set Function{~\cite{fisher1978analysis}}]
\label{def:submodular}
A \emph{set function} $f:2^{\mathcal{V}}\mapsto\mathbb{R}$ is normalized and non-decreasing submodular \emph{if and only if}
\begin{itemize}[leftmargin=*]
\item (Normalization) $f(\,\emptyset\,)=0$;
\item (Monotonicity) $f(\,\calA\,)\leq f(\,\calB\,)$, $\forall\,\calA\subseteq \calB\subseteq \calV$;
\item (Submodularity) $f(\,s\,|\,\calA\,)\geq f(\,s\,|\,{\mathcal{B}}\,)$, $\forall\,\calA\subseteq {\mathcal{B}}\subseteq\calV$ and $s\in \calV$.
\end{itemize}
\end{definition}


\begin{definition}[2nd-order Submodular Set Function{~\cite{crama1989characterization,foldes2005submodularity}}]
$f:2^{\mathcal{V}}\mapsto\mathbb{R}$ is 2nd-order submodular if and only if
\begin{equation}
f(s\,|\,\calC) - f(s\,|\,\calA\cup\calC) \geq f(s\,|\,\calB\cup\calC) - f(s\,|\,\calA\cup\calB\cup\calC),\label{eq:second_order_submod}
\end{equation}
for any disjoint $\calA, \calB, \calC\subseteq\mathcal{V}$ $(\calA\cap \calB\cap \calC=\emptyset)$ and $s\in\mathcal{V}$.
\end{definition}


\begin{problem}[Distributed Online Submodular Maximization under Communication Delays]
\label{prob:DOSM}
At each time step $t \in [T]$, each agent $i \in \mathcal{N}$, given its neighborhood $\mathcal{N}_i$, needs to select an action $a_{i,t}$ to jointly solve
\begin{equation}
\label{eq:problem1_obj}
    \max_{a_{i,t} \in \mathcal{V}_i,\, \forall i \in \mathcal{N}}
    \; \sum_{t=1}^{T} f_t\big(\{a_{i,t}\}_{i \in \mathcal{N}}\big),
\end{equation}
where $f_t : 2^{\mathcal{V}^{\mathcal{N}}} \to \mathbb{R}$ is a normalized, non-decreasing submodular, and 2nd-order submodular set function, and each agent $i$ can access the value of $f_t(\mathcal{A})$ only after it has selected $a_{i,t}$ at time $t$ and received $\{a_{j,t}\}_{j \in \mathcal{N}_i}$ at time $t + d^j_{i,t}, \, \forall \mathcal{A} \subseteq \{a_{i,t}\} \cup  \{a_{j,t}\}_{j \in \mathcal{N}_i}$.
\end{problem}


\Cref{prob:DOSM} is the same as the one presented in~\cite{xu2026distributed} with an additional consideration for when the action data from an agent's neighbors is received. \Cref{prob:DOSM} also highlights a tradeoff: larger coordination neighborhoods can improve action quality, but they also increase the delay before an agent can evaluate its reward, since that reward depends on neighbors' round-$t$ actions. To avoid waiting for all missing information, we adopt an estimation-correction approach in which each agent forms intermediate reward estimates using the currently received neighbor actions and refines them as additional information arrives. This motivates the delayed-bandit formulation in the next section.

\vspace{-1mm}\section{Distributed Online Greedy with Intermediate Updates Algorithm (\alg)} \label{sec:algorithm}

\setlength{\textfloatsep}{3mm}
\begin{algorithm}[t]
\caption{Distributed Online Greedy with Intermediate Updates (\alg) for Agent $i$
}
\label{alg:async_DOG}
\begin{algorithmic}[1]
\REQUIRE Number of time steps $T$, agent $i$'s action set $\mathcal{V}_i$,
         agent $i$'s in-neighborhood $\mathcal{N}_i$, communication delay bound $\bar{d}_i$.
\ENSURE Agent $i$'s action $a_{i,t}$, $\forall t \in [T]$.
\STATE $\eta_i \gets \sqrt{\log |\mathcal{V}_i| \big/ \big((|\mathcal{V}_i| + \bar{d}_i)T\big)}$;
\STATE $w_1 \gets [w_{v,1},\ldots,w_{|\mathcal{V}_i|,1}]^{\top}$ with $w_{v,1} = 1$,
       $\forall a \in \mathcal{V}_i$;
\FOR{each time step $t \in [T]$}
  \STATE get distribution $p_t \gets w_t / \|w_t\|_1$;
  \STATE draw action $a_{i,t} \in \mathcal{V}_i$ from $p_t$;
  \STATE broadcast $a_{i,t}$ to one-hop neighbors;
  \STATE receive neighbors' actions $\{a_{j,s}\}_{j \in \mathcal{N}_{i,s}}$ for all $s \in \mathcal{S}_t \triangleq \{s: s + d^{j}_{i,s} = t\}$;
  \STATE form estimates $Z^{t}_0 \text{ and } Z^{s}_{k_s}, \forall s \in \mathcal{S}_t$;
  \STATE $\hat r_{a,s}( Z^{s}_{k_s}) \gets 1 - \dfrac{\mathbf{1}(a_{i,s} = a)}{p_{a,s}}
         \bigl(1 - Z^{s}_{k_s}\bigr),$ $ \forall a \in \mathcal{V}_i,  \forall s \in \mathcal{S}_t\cup\{t\}$;
  \STATE form corrections $\Delta_{a,s}, \; \forall a \in \mathcal{V}_i, \forall s \in \mathcal{S}_t\cup\{t\}$;
    
    \STATE $w_{a,t+1} \gets w_{a,t} \exp\left( \sum_{s \in \mathcal{S}_t \cup \{t\}} {\Delta}_{a,s}\right), \; \forall a \in \mathcal{V}_i$;
  \STATE store all $Z^{s}_{k_s}$;
\ENDFOR
\end{algorithmic}
\end{algorithm}

We present the Distributed Online Greedy with Intermediate Updates algorithm (\alg) for \Cref{prob:DOSM}. Particularly, \Cref{prob:DOSM} takes the form of adversarial bandit problems with delayed feedback. However, we also need to enable intermediate updates using partial information (from a subset of an agent's neighborhood). Therefore, we generalize the adversarial bandit with delayed feedback problem formulation to allow for intermediate updates (\Cref{subsec:adversarial_bandit}), and then present the main algorithm (\Cref{subsec:algorithm}).

\subsection{Per-Agent Adversarial Bandit with Delayed Feedback and Intermediate Updates}
\label{subsec:adversarial_bandit}
The adversarial bandit with delayed feedback problem involves an agent selecting a sequence of actions to maximize the total reward over a given number of time steps~\cite{thune2019nonstochastic}. The challenges are: (i) at each time step $t$, no action's reward is known to the agent a priori, and (ii) after an action is selected, only the selected action's reward will become known with a time delay $d_t$, which is assumed to be known a priori. We present the problem in the following using the notation:
\begin{itemize}[leftmargin=*]
  \item $\mathcal{V}$ denotes the available action set;
  \item $v_t \in \mathcal{V}$ denotes the agent's selected action at $t$;
  \item $r_{v_t,t} \in [0,1]$ denotes the reward of selecting $v_t$ at $t$, which in our case is a submodular function marginal. In other words, the agent's reward is the marginal gain of its action $v_t$ given the actions of its neighbors;
  \item $d_t$ is the number of delayed time steps for the reward of selecting action $v_t$ at $t$ to be received. In our case, the agent will know the actions of all of its neighbors and be able to calculate $r_{v_t,t}$ at $t+d_t$;
  \item \textbf{Intermediate estimates:} From $t$ until $t + d_t$, the agent will form estimates of the round $t$ reward as it receives more round $t$ information.
\end{itemize}

\begin{problem}[Adversarial Bandit with Delayed Feedback and Intermediate Updates]\label{prob:ABD}
\emph{Consider a horizon of $T$ time steps. At each time step $t \in [T]$, the agent $i$ needs to select an action $v_t \in \mathcal{V}$ such that the regret}
\begin{equation}\label{eq:abd_regret}
\mathrm{Regret}_T \triangleq \max_{v \in \mathcal{V}} \sum_{t=1}^{T} r_{v,t}
\;-\; \sum_{t=1}^{T} r_{v_t,t},
\end{equation}
\emph{is minimized, where no actions' rewards are known \emph{a priori}, and only the selected action's \textbf{true} reward $r_{v_t,t} \in [0,1]$ will become known at $t+ d_t$, with partial information about the reward becoming available in multiple batches at intermediate rounds between $t$ and $t + d_t$.}
\end{problem}

This problem is a more general version of the delayed bandit feedback problem tackled by the Delayed Exponential Weights (\scenario{DEW}) algorithm in \cite{thune2019nonstochastic} as it allows for intermediate updates based on estimates of missing rewards using partial information. In the case of all of the delayed information for a round arriving at the same time, \Cref{prob:ABD} reduces to the delayed bandit feedback problem discussed in \cite{thune2019nonstochastic}. The goal of solving Problem~\ref{prob:ABD} is to achieve a sublinear $\mathrm{Regret}_T$, i.e., $\mathrm{Regret}_T / T \to 0$ for $T \to \infty$, since this implies that the agent asymptotically chooses optimal actions even though the rewards are unknown a priori.


\subsection{\alg Algorithm}
\label{subsec:algorithm}
We enable agents in the distributed setting to solve \Cref{prob:DOSM} by making them simultaneously solve their own instance of \Cref{prob:ABD}. Intuitively, our goal is for each agent $i$ at each time step to efficiently select an action $a_{i,t}$ that maximizes the marginal gain $f_t(a_{i,t}\, |\, \{a_{j,t}\}_{j\in \mathcal{N}_i})$ from the perspective of agent $i$. Thus, \alg aims to efficiently minimize the following quantification:
\begin{definition}[Static Regret for Each Agent $i$]
Given that agent $i$ has a neighborhood $\mathcal{N}_i$, and at each time step $t$, agent $i$ selects an action $a_{i,t}$.
Then, the static regret of $\{a_{i,t}\}_{t\in[T]}$ is defined as
\begin{equation}
\begin{aligned}
\mathrm{Reg}_T\!\left(\{a_{i,t}\}_{t\in[T]}\right)
&\triangleq
\max_{a\in \mathcal{V}_i}\ \sum_{t=1}^{T} 
f_t\!\left(a \mid \{a_{j,t}\}_{j\in\mathcal{N}_i}\right)\\ 
&\quad-\sum_{t=1}^{T} 
f_t\!\left(a_{i,t} \mid \{a_{j,t}\}_{j\in\mathcal{N}_i}\right).
\end{aligned}
\label{eq:static_regret_agent_i}
\end{equation}
\end{definition}


Because the neighbors' round-$t$ actions arrive with heterogeneous delays, agent $i$ cannot evaluate the true reward $r_{a_{i,t}}\triangleq f_t\bigl(a_{i,t}\,|\, \{a_{j,t}\}_{j\in\mathcal{N}_i}\bigr)$ immediately after selecting $a_{i,t}$. Instead, \alg forms an intermediate estimate of this reward using the actions already received for round $t$ together with estimates of the still-missing neighbor actions:
\begin{equation}
    Z^t_k \triangleq f(a_{i,t} \; | \; \big\{a_j\big\}_{j \in R_{i,t}^{(k)}} \cup \big\{\tilde{a}_j\big\}_{j \in M_{i,t}^{(k)}}),
\end{equation}
where $k\in\{0,1,\dots,K\}$ is the number of information batches for round $t$ received so far. In particular, $Z_K^t=r_{a_{i,t},t}$ once all neighbors' round-$t$ actions have arrived.




Following the standard \scenario{EXP3} approach, agent $i$ uses the importance weighted estimate
\begin{equation}
\label{eq:imp_weight_estimator}
    \hat{r}_{a,t}(x) \triangleq 1 - \frac{\mathbf{1}(a = a_{i,t})}{p_{a,t}}(1-x),
\end{equation}
where $x$ is either an intermediate estimate $Z_t^k$ or the true reward. At round $t$, agent $i$ maintains, for each unresolved past round $s$, the currently received and still-missing neighbor sets, and refines its estimate whenever new information for that round arrives. Let $k_s$ denote the number of batches for round $s$ received up to round $t$. \alg then applies
\begin{align}
    &{\Delta}_{a,t} = \eta_i\, \hat r_{a,t}( \hat{Z}^{t}_0), \label{eq:updates_1}
    \\
    &{\Delta}_{a,s} = \eta_i \left[ \hat r_{a,s}\bigl({Z}^{s}_{k_s}\bigr)
  - \hat r_{a,s}\bigl({Z}^{s}_{k_s-1}\bigr)\right], \label{eq:updates_2}
  \\ 
  &\notag \forall a \in \mathcal{V}_i, \quad s \in \{t-\bar{d}_i,\dots, t-1\},
\end{align}where $\eta_i$ is the learning rate. $\Delta_{a,t}$ is the update made after agent $i$ acts at round $t$, while $\Delta_{a,s}$ is a correction applied when additional round $s$ neighbor information arrives and refines the reward estimate for that round.

\Cref{alg:async_DOG} implements this procedure online. It initializes the learning rate and action weights (lines 1--2). Then at each round it computes the sampling distribution and draws an action (lines 4--5), broadcasts chosen action and receives newly arrived delayed neighbor actions (lines 6--7), forms updated reward estimates for the current and unresolved past rounds (line 8), converts them into importance-weighted estimates and corrections (lines 9--10), and finally updates the weights (line 11) before storing the new estimates (line 12).

\section{Guarantees}\label{sec:bound}

We present the static regret bound of \alg's per-agent solution to Problem~\ref{prob:ABD}. Then, we present the suboptimality bound of \alg at the network level. The bound compares \alg's solution to the optimal solution of \Cref{prob:DOSM}. Leveraging the concept of \scenario{coin} (\Cref{def:coin}) that captures the suboptimality cost of distributed communication and computation, the bound covers the spectrum of \alg's approximation performance from when the network is fully centralized (all agents communicating with \textit{all}) to fully decentralized (all agents communicating with \textit{none}). Finally, we present the convergence analysis of \alg.

\begin{definition}[Cumulative Error]
For each round $t$, we define the cumulative error of \alg's reward estimates compared to the true rewards for rounds $s \in \{t-\bar{d}_i +1, \ldots, t\}$ as
\begin{equation}
\label{eq:cumulative_err_def}
    \varepsilon_{a,t}^i \triangleq \sum_{s = t-\bar{d}_i +1}^{t} \hat{r}_{a,s}(Z^{s}_{k_s}) - \hat{r}_{a,s}(r_{a_{i,s}}),
\end{equation}
where $r_{a_{i,s}}$ is the true reward for agent $i$'s action for round $s$ and $Z^s_{k_s}$ is $i$'s current estimate of the round $s$ reward.

We also define the maximum cumulative error over the action set as
\begin{equation}
    M_t^i \triangleq \max_{a \in \mathcal{V}_i} \,|\varepsilon^i_{a,t}|. \label{eq:Mt_definition}
\end{equation}
\end{definition}

$M^i_t$ is the worst-case absolute error (across actions) in the cumulative loss estimates at round $t$. 

\begin{definition}[Average Maximum Cumulative Error]
    For a horizon of $T$ rounds, define
    \begin{equation}
    \label{eq:Mt_bar_definition}
        \bar{M}_T^i \triangleq \frac{1}{T}\sum_{t=1}^{T} \mathbb{E}[M_t^i].
    \end{equation}
\end{definition}

$\bar{M}_T^i$ is a measure of how far \alg's internal model of rewards deviates from the true importance weighted rewards on average over the horizon for agent $i$.

\begin{theorem}[Per-Agent Adversarial Bandit with Delayed Feedback and Intermediate Updates]
\label{thm:per_agent}
    The per-agent regret of \Cref{alg:async_DOG} with a learning rate $\eta = \sqrt{\frac{\ln |\mathcal{V}_i|}{|\mathcal{V}_i| T (1+\bar{M}_T^i/4)}}$ against an oblivious adversary satisfies
    \begin{equation}
        \frac{\mathbb{E}[\mathrm{Reg}_T]}{T} \leq \Tilde{O}\mleft(\sqrt{\frac{|\mathcal V_i| \bar{M}_T^i}{T}} \mright),
    \end{equation}where $\bar{M}_T^i$ is defined in \cref{eq:Mt_bar_definition}. In the worst case of reward estimates being as far from the truth as possible,  by bounding $\mathbb{E}[M_t^i] \leq |\mathcal{V}_i|\bar{d}_i$, for $\eta = \sqrt{\frac{\ln |\mathcal{V}_i|}{|\mathcal{V}_i| T (1+|\mathcal V_i|\bar{d}_i/4)}}$, it holds true
    \begin{equation}
        \frac{\mathbb{E}[\mathrm{Reg}_T]}{T} \leq \Tilde{O}\mleft(|\mathcal V_i| \sqrt{\frac{\bar{d}_i}{T}} \mright).
    \end{equation}
\end{theorem}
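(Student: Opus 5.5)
We will follow the standard exponential-weights potential argument, as for \scenario{EXP3} and its delayed variant \scenario{DEW} in~\cite{thune2019nonstochastic}. The key is to view \alg's weights as an exponential-weights update on a perturbed cumulative reward vector.

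\textbf{Telescoping of the corrections.} Summing~\cref{eq:updates_1} and~\cref{eq:updates_2} over all rounds up to $t$, the corrections for each past round $s$ telescope. Hence
\begin{equation*}
\ln w_{a,t+1} = \eta\sum_{s\le t}\hat r_{a,s}(Z^s_{k_s}).
\end{equation*}
For rounds $s\le t-\bar d_i$ the estimate is exact, since $Z^s_{k_s}=Z^s_K=r_{a_{i,s}}$. So $\ln w_{a,t+1}=\eta(\hat R_{a,t}+\varepsilon^i_{a,t})$, where $\hat R_{a,t}\triangleq\sum_{s\le t}\hat r_{a,s}(r_{a_{i,s}})$ is the ideal importance-weighted cumulative reward and $\varepsilon^i_{a,t}$ is exactly the cumulative error in~\cref{eq:cumulative_err_def}. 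The algorithm is therefore exponential weights run on $\hat R+\varepsilon$, with $|\varepsilon_{a,t}|\le M^i_t$.

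\textbf{Potential argument.} Let $W_t=\|w_t\|_1$. The lower bound is
\begin{equation*}
\ln(W_{T+1}/W_1)\ge \eta\bigl(\hat R_{a^\star,T}-M^i_T\bigr)-\ln|\mathcal V_i|.
\end{equation*}
For the upper bound, we write $\ln(W_{t+1}/W_t)$ as a softmax-ratio increment. The rewards $\hat r\le1$ have the form $1-(\text{nonnegative loss})$, so we shift to losses $\hat\ell=1-\hat r\ge0$ and use $e^{-x}\le1-x+x^2/2$ for $x\ge0$. The per-round increment is then at most $\eta\langle p_t,\hat r_t\rangle+\frac{\eta^2}{2}\langle p_t,\hat\ell_t^2\rangle$ plus a drift term. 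This drift comes from the change $\varepsilon_{\cdot,t}-\varepsilon_{\cdot,t-1}$ and from evaluating $p_t$ at perturbed weights rather than ideal ones. We control it by comparing $p_t$ with the ideal distribution $q_t\propto e^{\eta\hat R_{\cdot,t-1}}$: multiplicatively $p_{a,t}/q_{a,t}\in[e^{-2\eta M^i_{t-1}},e^{2\eta M^i_{t-1}}]$, in the spirit of the drift lemma of~\cite{thune2019nonstochastic}.

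\textbf{Taking expectations.} Against an oblivious adversary, $\mathbb E[\hat r_{a,t}]=r_{a,t}$ and $\mathbb E\langle p_t,\hat\ell_t^2\rangle\le|\mathcal V_i|$. The main obstacle is the drift term, because $p_t$ and the errors are correlated. We plan to bound it by $\eta^2|\mathcal V_i|\,\mathbb E[M^i_t]/4$ per round, using $|p-q|\lesssim\eta M q$ paired with the importance-weighted variance $\sum_a q_a\hat\ell_a\le\hat\ell$-mass $\lesssim|\mathcal V_i|$ in expectation. Summing over $t$ then gives
\begin{equation*}
\mathbb E[\mathrm{Reg}_T]\le \frac{\ln|\mathcal V_i|}{\eta}+\eta|\mathcal V_i|T\bigl(1+\bar M^i_T/4\bigr)+O(\mathbb E M^i_T),
\end{equation*}
where the first two terms correspond to the $\eta^2/2$ and drift contributions after rescaling.

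\textbf{Optimizing and the worst case.} Plugging in the stated $\eta$ gives $2\sqrt{|\mathcal V_i|T(1+\bar M^i_T/4)\ln|\mathcal V_i|}$. Dividing by $T$ yields $\tilde O(\sqrt{|\mathcal V_i|\bar M^i_T/T})$, taking $\bar M^i_T\gtrsim1$; otherwise the term is dominated by the standard \scenario{EXP3} rate. For the worst case, each summand of $\varepsilon^i_{a,t}$ is nonzero only when $a=a_{i,s}$, with magnitude $|Z-r|/p_{a,s}\le1/p_{a,s}$. Taking the conditional expectation over $a_{i,s}\sim p_s$, and summing over $a$ and over the $\bar d_i$ unresolved rounds, gives $\mathbb E[M^i_t]\le|\mathcal V_i|\bar d_i$. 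Substituting this bound gives $\tilde O(|\mathcal V_i|\sqrt{\bar d_i/T})$.
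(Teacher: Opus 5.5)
\textbf{A gap at the crux, with a fix.} Three of your ingredients are correct: the telescoping identity $\ln w_{a,t+1}=\eta(\hat R_{a,t}+\varepsilon^i_{a,t})$, the multiplicative comparison $p_{a,t}/q_{a,t}\in[e^{-2\eta M^i_{t-1}},e^{2\eta M^i_{t-1}}]$, and the worst-case bound $\mathbb{E}[M^i_t]\le|\mathcal V_i|\bar d_i$. The step you flag as the main obstacle, however, does not go through as planned.

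First, the per-round inequality ``increment $\le\eta\langle p_t,\hat r_t\rangle+\frac{\eta^2}{2}\langle p_t,\hat\ell_t^2\rangle$ plus drift'' is not available for the actual potential $W_t$. The difference $\varepsilon_{\cdot,t}-\varepsilon_{\cdot,t-1}$ contains positive corrections of size up to $1/p_{a,s}$ that sit in the exponent. These are not covered by $e^{-x}\le 1-x+x^2/2$, which needs $x\ge 0$, and their log-sum-exp contributions do not telescope over $t$. The workable version therefore runs the potential on $q_t\propto e^{\eta\hat R_{\cdot,t-1}}$. Its second-order term is then $\mathbb{E}\bigl[\sum_a q_{a,t}\ell_{a,t}^2/p_{a,t}\bigr]$ with $\ell=1-r$, and the ratio $q/p$ is unavoidable there.

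Second, your linearization $|p-q|\lesssim\eta M q$ needs $\eta M^i_{t-1}=O(1)$, and nothing guarantees this. $M^i_t$ is a sum of importance-weighted errors; it is unbounded, and only its mean is controlled. Using the exact ratio gives $|\mathcal V_i|\,\mathbb{E}[e^{2\eta M^i_{t-1}}]$, which is not bounded by any function of $\mathbb{E}[M^i_{t-1}]$.

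Your description also blends two different terms. The first-order drift $\eta\langle q_t-p_t,\hat r_t\rangle$ has conditional mean $\eta\langle q_t-p_t,r_t\rangle$. It only needs $\|q_t-p_t\|_1\le\min\{2,e^{2\eta M^i_{t-1}}-1\}=O(\eta M^i_{t-1})$, with no variance involved. The real difficulty is the ratio inside the second-order term.

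The missing idea is a truncation. Since $\hat\ell\ge 0$, the exact second-order excess is $q_{a_{i,t},t}\,\phi(\eta\ell_{a_{i,t},t}/p_{a_{i,t},t})$ with $\phi(x)=x-1+e^{-x}\le\min\{x,x^2/2\}$. Its conditional mean is at most $\min\{\eta,\tfrac{\eta^2}{2}|\mathcal V_i|e^{2\eta M^i_{t-1}}\}$. Splitting on whether $\eta M^i_{t-1}\le 1$, this is at most $\tfrac{e^2}{2}\eta^2|\mathcal V_i|+\eta^2M^i_{t-1}$. With this step your route closes and yields $\ln|\mathcal V_i|/\eta+O(\eta|\mathcal V_i|T)+O(\eta T\bar M^i_T)$. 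That gives both stated rates, for $\bar M^i_T\gtrsim 1$ as you note.

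\textbf{How the paper's proof differs.} The paper never runs a potential on the perturbed weights. It compares $\hat p_t$ with the reference $p^{\mathrm{Exp3}}_t\propto\exp(-\eta\widetilde L_{t-1})$ additively, using the $\tfrac12$-Lipschitz property of softmax in $\ell_1$: $\|\hat p_t-p^{\mathrm{Exp3}}_t\|_1\le\tfrac{\eta}{2}\sum_a|\varepsilon_{a,t-1}|\le\tfrac{\eta|\mathcal V_i|}{2}M_{t-1}$. This bound is linear in the error at every magnitude. The paper uses it to bound $\sum_t\mathbb{E}\langle\hat p_t-p^{\mathrm{Exp3}}_t,\ell_t\rangle$, then adds the standard \scenario{EXP3} bound $\ln|\mathcal V_i|/\eta+\eta|\mathcal V_i|T$ for the reference sequence as a black box. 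So the paper's drift term is exactly your first-order drift, handled additively rather than multiplicatively, and needing no smallness of $\eta M$.

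Note, however, that the quoted \scenario{EXP3} bound is where your obstacle reappears. The reference's importance weights come from $\hat p_t$, so its second-order term is $\sum_a p^{\mathrm{Exp3}}_{a,t}\ell^2_{a,t}/\hat p_{a,t}$, and the paper does not control that ratio explicitly. The same truncation closes it there too.
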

The bound provided by \cite{thune2019nonstochastic} for the \scenario{DEW} algorithm is
\begin{equation}
    \frac{\mathbb{E}[\mathrm{Reg}^{\texttt{DEW}}_T]}{T} \leq \Tilde{O}\mleft(\sqrt{\frac{|\mathcal V_i|+\bar{d}_i}{T}} \mright).
\end{equation}

This means that even in the worst case of \alg's missing action estimates resulting in the worst reward estimates, \alg's regret has an extra $|\mathcal{V}|^2$ factor in front of the delay term. However, we can see how \alg's regret is controlled by the expected maximum cumulative regret $\bar{M}_T^i$. This means having better estimates for neighbors' actions reduces the regret. For example, assume that $|\mathcal{V}_i| = 4$ and that for each round in the window $s \in \{t-\bar{d}+1,\ldots, t\}$ agent $i$'s reward estimates are within $0.25$ of the true reward estimates for all actions on average, i.e., $|\hat{r}_{a,s}(Z^{s}_{k_s}) - \hat{r}_{a,s}(r_{a_i,s})| \leq 0.25$. Then we get an average maximum cumulative error of $\bar{M}_T^i = \bar{d}_i/4$ and the regret term becomes better than \scenario{DEW}'s regret.



\begin{definition}[Centralization of Information~\cite{xu2025communication}]
\label{def:coin}
For each time step $t\in [T]$, consider a function $f_t:2^{\calV_\calN}\mapsto$ $\mathbb{R}$ and a communication network $\{\calN_i\}_{i\myin\calN}$ where each agent $i\in \calN$ has selected an action $a_{i,t}$. Then, at time $t$, agent $i$'s \emph{Centralization Of INformation} is defined as
\begin{equation}\label{eq:ourcurv}
  \ourcurv_{f_t,i} (\calN_{i})\triangleq f_t(a_{i,t}) - f_t(a_{i,t}\,|\,\{a_{j,t}\}_{j\myin\calN_{i}^c}).
\end{equation}
\end{definition}

$\ourcurv_{f_t,i}$ measures how much $a_{i,t}$ can overlap with the actions of agent $i$'s non-neighbors. In the best scenario, where $a_{i,t}$ does not overlap with other actions at all, \ie $f_t(a_{i,t}\,|\,\{a_{j,t}\}_{j\myin\calN_{i}^c})=f_t(a_{i,t})$, then $\ourcurv_{f_t,i}=0$.  In the worst case instead where $a_{i,t}$ is fully redundant, \ie $f_t(a_{i,t}\,|\,\{a_{j,t}\}_{j\myin\calN_{i}^c})=0$, then $\ourcurv_{f_t,i}= f_t(a_{i,t})$.

\begin{definition}[Curvature~\cite{conforti1984submodular}]\label{def:curvature}
        The curvature of a normalized submodular function $f\colon 2^{\calV}\mapsto \mathbb{R}$ is defined as
        \begin{equation}
            \kappa_f\triangleq 1-\min_{\elem\in\calV}{[f(\calV)-f(\calV\setminus\{\elem\})]}/{f(\elem)}.
        \end{equation}
\end{definition}$\kappa_f$ measures how far~$f$ is from modularity: if $\kappa_f=0$, then  $f(\calV)-f(\calV\setminus\{v\})=f(v)$, $\forall v\in\calV$, \ie $f$ is modular. In~contrast, $\kappa_f=1$ in the extreme case where there exist $v\in\calV$ such that $f(\calV)=f(\calV\setminus\{v\})$, \ie $v$~has no contribution in the presence of $\calV\setminus\{v\}$.

\begin{theorem}[\alg's Approximation Performance]
\label{thm:approx_performance}
Over $t\in[T]$, given the communication network $\{\mathcal{N}_i\}_{i\in\mathcal{N}}$,
\alg instructs each agent $i\in\mathcal{N}$ to select actions $\{a_{i,t}\}_{t\in[T]}$
\begin{itemize}[leftmargin=*]
\item If the network is fully centralized, i.e., $\mathcal{N}_i=\mathcal{N}\setminus\{i\}$,
\small\begin{align}
\hspace{-3mm}\mathbb{E}\mleft[\frac{1}{T} \sum_{t=1}^T f_t(\calA_t)\mright]
\ge&
\frac{1}{1+\kappa_f}
\mathbb{E}\mleft[\frac{1}{T} \sum_{t=1}^Tf_t(\calA^{\opt})\mright] \nonumber \\
&-\underbrace{\tilde{\mathcal{O}}\mleft(|\mathcal{N}|\sqrt{\frac{|\bar{\mathcal{V}}|\bar{M}_T}{T}}\mright)}_{\phi(T)}.
\label{eq:thm1_centralized}
\end{align}

\item If the network is fully decentralized, i.e., $\mathcal{N}_i=\emptyset$,
\small\begin{align}
\hspace{-3mm}\mathbb{E}\mleft[\frac{1}{T} \sum_{t=1}^T f_t(\calA_t)\mright]
\ge&
(1-\kappa_f)
\mathbb{E}\mleft[\frac{1}{T} \sum_{t=1}^T f_t(\calA^{\opt})\mright] \nonumber \\
&-\underbrace{\tilde{\mathcal{O}}\mleft(|\mathcal{N}|\sqrt{\frac{|\bar{\mathcal{V}}|\bar{M}_T}{T}}\mright)}_{\chi(T)}.
\label{eq:thm1_decentralized}
\end{align}

\item If the network is anything in between fully centralized and fully decentralized,
i.e., $\mathcal{N}_i\subseteq \mathcal{N}\setminus\{i\}$,
\small\begin{equation}
\begin{aligned}
\mathbb{E}\mleft[\frac{1}{T} \sum_{t=1}^T f_t(\calA_t)\mright]
\ge\;&
\frac{1}{1+\kappa_f}\,
\mathbb{E}\mleft[\frac{1}{T} \sum_{t=1}^T f_t(\calA^{\opt})\mright]\\
&\hspace{-2.5cm}-\frac{\kappa_f}{1+\kappa_f}\sum_{i\in\mathcal{N}}
\mathbb{E}\mleft[\frac{1}{T} \sum_{t=1}^T \ourcurv_{f_t,i}(\mathcal{N}_i)\mright]-\underbrace{\tilde{\mathcal{O}}\mleft(|\mathcal{N}|\sqrt{\frac{|\bar{\mathcal{V}}|\bar{M}_T}{T}}\mright)}_{\psi(T)}.
\end{aligned}
\label{eq:thm1_intermediate}
\end{equation}
\end{itemize}
Particularly, the expectation is due to \alg's internal randomness, and $\tilde{O}(\cdot)$ hides $\log$~terms and $|\bar{\mathcal{V}}| = \max_{i \in \mathcal{N}}|\mathcal{V}_i|$ along with $\bar{M}_T = \max_{i \in \mathcal{N}}\bar{M}_T^i$.
\end{theorem}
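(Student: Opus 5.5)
The plan is to reduce the network-level guarantee to the per-agent guarantee of \Cref{thm:per_agent}. The reduction is a per-round deterministic inequality comparing $f_t(\calA_t)$ to $f_t(\calA^{\opt})$ through the neighborhood marginals $f_t(a \mid \{a_{j,t}\}_{j\in\calN_i})$, in the style of the analysis of \scenario{RAG}/\dog in~\cite{xu2025communication,xu2026distributed}. Because the per-agent regret is stated exactly in terms of these neighborhood marginals, the bandit error and the combinatorial loss separate cleanly.

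\textbf{Step 1 (per-round inequality with exact marginals).} Fix $t$ and let $\calA_t=\{a_{i,t}\}_{i\in\calN}$ and $\calA^{\opt}=\{a_i^{\opt}\}$. I would first show that for any choice of comparators $\{a_i^{\opt}\}$,
\begin{equation*}
f_t(\calA^{\opt}) \le f_t(\calA_t) + \sum_{i\in\calN} f_t(a_i^{\opt}\mid \calA_t\setminus\{a_{i,t}\}) \le f_t(\calA_t) + \sum_{i} f_t(a_i^{\opt}\mid\{a_{j,t}\}_{j\in\calN_i}).
\end{equation*}
The first inequality uses monotonicity and submodularity. The second uses submodularity, since $\calN_i\subseteq\calN\setminus\{i\}$.

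Next I add and subtract the played marginals $f_t(a_{i,t}\mid\{a_{j,t}\}_{j\in\calN_i})$. This splits the right-hand side into two parts. The first part is $\sum_i[f_t(a_i^{\opt}\mid\cdot)-f_t(a_{i,t}\mid\cdot)]$; summed over $t$ it is at most $\sum_i \mathrm{Reg}_T(\{a_{i,t}\})$, because the fixed action $a_i^{\opt}$ is one of the candidates in the max of \cref{eq:static_regret_agent_i}. The second part is $\sum_i f_t(a_{i,t}\mid\{a_{j,t}\}_{j\in\calN_i})$, which I relate back to $f_t(\calA_t)$.

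\textbf{Step 2 (curvature and \ourcurv).} Here I use the curvature bound $\sum_i f_t(a_{i,t}\mid \calA_t\setminus\{a_{i,t}\}) \ge (1-\kappa_f)\sum_i f_t(a_{i,t})$ together with $\sum_i f_t(a_{i,t}\mid\calA_t\setminus\{a_{i,t}\})\le f_t(\calA_t)$.

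To pass from neighborhood marginals to full marginals, I use 2nd-order submodularity. It shows that the gap $f_t(a_{i,t}\mid\{a_{j,t}\}_{\calN_i})-f_t(a_{i,t}\mid\calA_t\setminus\{a_{i,t}\})$ is at most the gap $f_t(a_{i,t})-f_t(a_{i,t}\mid\{a_{j,t}\}_{\calN_i^c})=\ourcurv_{f_t,i}(\calN_i)$. Combining these gives
\begin{equation*}
\sum_i f_t(a_{i,t}\mid \{a_{j,t}\}_{\calN_i}) \le f_t(\calA_t) + \sum_i \ourcurv_{f_t,i}(\calN_i).
\end{equation*}

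To get the sharper $\tfrac{1}{1+\kappa_f}$ factor with the weight $\tfrac{\kappa_f}{1+\kappa_f}$ on \ourcurv, I would follow~\cite{xu2025communication}. Write $\sum_i f_t(a_{i,t}\mid\cdot)\le \kappa_f f_t(\calA_t)+\kappa_f\sum_i\ourcurv_{f_t,i}+\ldots$, which uses $f_t(\calA_t)\ge(1-\kappa_f)\sum_i f_t(a_{i,t})$ to trade the modular upper bound against the curvature. Rearranging then gives $(1+\kappa_f)f_t(\calA_t)\ge f_t(\calA^{\opt})-\kappa_f\sum_i\ourcurv_{f_t,i}-\sum_i(\text{regret terms})$. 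Getting these constants exactly right is where I expect the main obstacle. If the direct route fails, I would invoke the corresponding deterministic lemma of~\cite{xu2025communication} verbatim, since the per-round inequality there uses only $f_t$, the sets $\calN_i$, and the played actions. It does not depend on how the actions were chosen, so delays play no role in it.

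\textbf{Step 3 (specializations and averaging).} I average over $t$ and take expectations over the algorithm's randomness. The oblivious adversary assumption lets \Cref{thm:per_agent} apply per agent, with $\calA^{\opt}$ fixed across rounds as a comparator. Bounding each $\mathbb{E}[\mathrm{Reg}_T]/T$ by $\tilde O(\sqrt{|\bar\calV|\bar M_T/T})$ and summing over the $|\calN|$ agents gives the $\psi(T)$ term.

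The two extreme cases then follow from the general bound. In the fully centralized case, $\calN_i^c=\emptyset$, so $\ourcurv_{f_t,i}=0$, which yields $\phi(T)$. In the fully decentralized case, the first inequality of Step 1 is instead used with $\calN_i=\emptyset$. The marginal sum $\sum_i f_t(a_{i,t})$ is then converted to $f_t(\calA_t)$ via curvature, $f_t(\calA_t)\ge(1-\kappa_f)\sum_i f_t(a_{i,t})$, and the optimum satisfies $f_t(\calA^{\opt})\le\sum_i f_t(a_i^{\opt})$. Together these give the $(1-\kappa_f)$ ratio directly, with the regret term $\chi(T)$.
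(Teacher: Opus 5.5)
You have a genuine gap in the centralized and intermediate cases: the route you set up cannot reach the factor $\frac{1}{1+\kappa_f}$ or the weight $\frac{\kappa_f}{1+\kappa_f}$ on $\ourcurv_{f_t,i}(\calN_i)$. In Step~1 you pass through $f_t(\calA^{\opt})\le f_t(\calA^{\opt}\cup\calA_t)$, which discards a negative term. As a result, the played neighborhood marginals $\sum_i f_t(a_{i,t}\mid\{a_{j,t}\}_{j\in\calN_i})$ survive with coefficient $1$. Your Step~2 bound $\sum_i f_t(a_{i,t}\mid\{a_{j,t}\}_{j\in\calN_i})\le f_t(\calA_t)+\sum_i\ourcurv_{f_t,i}(\calN_i)$ is correct. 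But combined with Step~1 it yields only $f_t(\calA^{\opt})\le 2f_t(\calA_t)+\sum_i\ourcurv_{f_t,i}(\calN_i)+\text{(regret terms)}$, i.e., the theorem with $\kappa_f$ replaced by $1$.

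For your rearrangement to produce $1+\kappa_f$, the omitted ``$\ldots$'' would have to vanish. You would then need $\sum_i f_t(a_{i,t}\mid\{a_{j,t}\}_{j\in\calN_i})\le\kappa_f f_t(\calA_t)+\kappa_f\sum_i\ourcurv_{f_t,i}(\calN_i)$, which is false: for modular $f_t$ one has $\kappa_f=0$ and $\ourcurv_{f_t,i}=0$, while the left side equals $f_t(\calA_t)$. The same example shows that no bound of the form $c\,f_t(\calA_t)+c'\sum_i\ourcurv_{f_t,i}(\calN_i)$ with $c<1$ exists. So no use of $f_t(\calA_t)\ge(1-\kappa_f)\sum_i f_t(a_{i,t})$ can rescue Step~1, and invoking the prior work's lemma verbatim is a deferral rather than an argument.

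The missing idea is to keep the term you dropped. Start from the exact telescoping
\[
f_t(\calA^{\opt})=f_t(\calA^{\opt}\cup\calA_t)-\sum_{i\in\calN}f_t\big(a_{i,t}\mid\calA^{\opt}\cup\{a_{j,t}\}_{j\in[i-1]}\big)
\]
and lower-bound each subtracted marginal by curvature and then submodularity: $f_t(a_{i,t}\mid\calA^{\opt}\cup\{a_{j,t}\}_{j\in[i-1]})\ge(1-\kappa_f)f_t(a_{i,t})\ge(1-\kappa_f)f_t(a_{i,t}\mid\{a_{j,t}\}_{j\in\calN_i})$. If $a_{i,t}=a_i^{\opt}$, drop index $i$ from the telescoping; its terms in the final bound are nonnegative. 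Combine this with $f_t(\calA^{\opt}\cup\calA_t)\le f_t(\calA_t)+\sum_i f_t(a_i^{\opt}\mid\{a_{j,t}\}_{j\in\calN_i})$. The played neighborhood marginals now carry coefficient $\kappa_f$, and your Step~2 inequality multiplied by $\kappa_f$ gives $f_t(\calA^{\opt})\le(1+\kappa_f)f_t(\calA_t)+\kappa_f\sum_i\ourcurv_{f_t,i}(\calN_i)+\sum_i\big[f_t(a_i^{\opt}\mid\{a_{j,t}\}_{j\in\calN_i})-f_t(a_{i,t}\mid\{a_{j,t}\}_{j\in\calN_i})\big]$. This is exactly the paper's per-round inequality. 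Your Step~2, which compares with $\calA_t\setminus\{a_{i,t}\}$ and applies 2nd-order submodularity against the full complement, is a slightly cleaner substitute for the paper's ordering-based step.

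Your decentralized argument is fine and more elementary than the paper's, which specializes the $\kappa_f$-weighted inequality to $\calN_i=\emptyset$. This holds provided you start from $f_t(\calA^{\opt})\le\sum_i f_t(a_i^{\opt})$ and not from Step~1, which combined with curvature would give only $\frac{1-\kappa_f}{2-\kappa_f}$.
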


As $T \to \infty$, the error terms $\phi(T)$, $\chi(T)$, and $\psi(T)$ in \cref{eq:thm1_centralized,eq:thm1_decentralized,eq:thm1_intermediate} vanish, so the approximation quality of \alg is asymptotically governed by curvature and network structure. The fully connected case achieves the centralized factor ${1}/(1+\kappa_f)$, whereas partial decentralization incurs the additional penalty that depends on $\scenario{coin}_{f_t,i}$, capturing the loss from limited coordination. Thus, larger coordination neighborhoods improve steady-state performance, while $\phi(T),\chi(T),\psi(T)$ only describe transient learning error. Importantly, the $1/(1+\curv_f)$ suboptimality bound with a fully connected network recovers the bound in~\cite{conforti1984submodular} and is near-optimal as the best possible bound for \cref{eq:problem1_obj} is $1-\kappa_f/e$~\cite{sviridenko2017optimal}.\footnote{{The bounds $1/(1+\curv_f)$ and $1-\kappa_f/e$ become $1/2$ and $1-1/e$ when, in the worst case, $\kappa_f=1$.}}%

Finally, we present the convergence analysis of \alg. 
\begin{theorem}[\alg's Convergence Time]\label{thm:convergence}
\alg achieves $\varepsilon$-convergence to near-optimal actions after $\tilde{O}\mleft({|\mathcal{N}|^2 |\bar{\mathcal V}| \bar{M}_T}/{\varepsilon^2}\mright)$ rounds.
\end{theorem}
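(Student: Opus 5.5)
The plan is to read \Cref{thm:convergence} directly off the additive error term in \Cref{thm:approx_performance}. We say \alg has \emph{$\varepsilon$-converged} once the time-averaged expected reward $\mathbb{E}[\frac{1}{T}\sum_{t=1}^T f_t(\calA_t)]$ is within $\varepsilon$ of the corresponding asymptotic guarantee. In the fully centralized case that guarantee is $\frac{1}{1+\kappa_f}\mathbb{E}[\frac{1}{T}\sum_t f_t(\calA^{\opt})]$. In the intermediate case it also includes the \scenario{coin} penalty. Since the curvature and \scenario{coin} terms do not depend on $T$, convergence is governed entirely by the transient terms $\phi(T),\chi(T),\psi(T)$. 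All three share the form $\tilde{O}\bigl(|\mathcal{N}|\sqrt{|\bar{\mathcal V}|\bar M_T/T}\bigr)$, so it suffices to find the smallest $T$ making this at most $\varepsilon$.

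First, I will recall where that term comes from in the proof of \Cref{thm:approx_performance}. The network-level bound follows from the standard greedy/curvature argument applied to each $f_t$. That argument uses submodularity, 2nd-order submodularity and \scenario{coin} to relate $f_t(\calA_t)$ to the sum of agents' marginals $f_t(a_{i,t}\mid\{a_{j,t}\}_{j\in\mathcal N_i})$. Comparing each agent's marginal to its best fixed action in hindsight leaves $\sum_{i\in\mathcal N}\mathbb{E}[\mathrm{Reg}_T^i]/T$ as the additive loss. By \Cref{thm:per_agent}, each summand is $\tilde O(\sqrt{|\mathcal V_i|\bar M_T^i/T})\le \tilde O(\sqrt{|\bar{\mathcal V}|\bar M_T/T})$, using $|\bar{\mathcal V}|=\max_i|\mathcal V_i|$ and $\bar M_T=\max_i\bar M_T^i$. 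Summing over $|\mathcal N|$ agents gives the stated error term. I will take the constant (and log factors) $C$ hidden in $\tilde O$ as given, so that the error is at most $C|\mathcal N|\sqrt{|\bar{\mathcal V}|\bar M_T/T}$.

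Second, I will solve $C|\mathcal N|\sqrt{|\bar{\mathcal V}|\bar M_T/T}\le\varepsilon$ for $T$, which gives $T\ge C^2|\mathcal N|^2|\bar{\mathcal V}|\bar M_T/\varepsilon^2$, i.e., $T=\tilde O(|\mathcal N|^2|\bar{\mathcal V}|\bar M_T/\varepsilon^2)$. The log factor $\ln|\mathcal V_i|$ is independent of $T$, so it is absorbed into $\tilde O$. Because the bounded-delay assumption gives $\bar M_T^i\le|\mathcal V_i|\bar d_i$ uniformly in $T$, this bound is finite. That assumption is what the convergence statement relies on.

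The main obstacle is that $\bar M_T$ itself depends on $T$: it is an average over the horizon, and the learning rate in \Cref{thm:per_agent} is tuned to $\bar M_T^i$. The inequality for $T$ is therefore implicit. I would handle this by using the $T$-uniform bound $\bar M_T\le\sup_{T'}\bar M_{T'}\le|\bar{\mathcal V}|\bar d$, interpreting $\bar M_T$ in the statement as this (or any $T$-independent) upper bound. If the learning rate is tuned with an upper bound in place of the exact $\bar M_T^i$, the regret bound of \Cref{thm:per_agent} still holds up to constants. With that substitution, the $T$ obtained above is explicit and the statement follows.
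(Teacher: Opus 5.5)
Your proposal is correct and matches the paper's proof: the paper simply notes that $\tilde{O}\bigl(|\mathcal{N}|^2|\bar{\mathcal V}|\bar M_T/\varepsilon^2\bigr)$ rounds make the transient terms $\phi(T),\chi(T),\psi(T)$ of \Cref{thm:approx_performance} drop below $\varepsilon$, which is your inversion of $C|\mathcal N|\sqrt{|\bar{\mathcal V}|\bar M_T/T}\le\varepsilon$. The paper omits your extra step of replacing the horizon-dependent $\bar M_T$ by a $T$-uniform bound such as $|\bar{\mathcal V}|\bar d$, with $\eta$ tuned to that bound; this makes the argument more rigorous, but it proves the statement with that bound in place of $\bar M_T$, not with $\bar M_T$ itself.
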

\paragraph*{Proof} $\tilde{O}\mleft({|\mathcal{N}|^2 |\bar{\mathcal V}| \bar{M}_T}/{\varepsilon^2}\mright)$ rounds are needed to ensure $\phi(T), \chi(T), \psi(T)<\varepsilon$. \qed

\section{Asynchronous Formulation}
\label{sec:async}

We now consider asynchronous agents that run on their own clocks. Particularly, time is indexed by an ideal global (logical) clock $t \in [T]$, but each agent $i$ runs on its own local clock $C_i(\cdot)$, which is a strictly increasing function of physical time, following standard models of distributed systems and clock synchronization \cite{lamport1978time,freris2011fundamental}. Furthermore, let $\tau_i(t)\in\mathbb{R}_{\geq0}$ denote the physical time at which agent $i$ executes the update associated with logical round $t$. Since agents operate on distinct local clocks, the collection $\{\tau_i(t)\}_{i=1}^n$ will never be identical.  To this end, we assume a uniform bound on the resulting timing mismatch between the agents:
\begin{equation}
    |\tau_i(t)-\tau_j(t)| \le \rho,
    \qquad \forall i,j \in \mathcal{N},\ \forall t\ge 1.
    \label{eq:bounded_skew}
\end{equation}This is a reasonable assumption as in distributed systems, local hardware clocks are typically modeled as having bounded drift, while synchronization mechanisms are designed to keep the induced logical-clock skew bounded despite uncertainty in communication latency \cite{lamport1978time,tsitsiklis1986distributed,fan2004gradient,freris2011fundamental}. Also, similar bounded-clock-error assumptions appear in prior decentralized reachability-based control for distributed CPS~\cite{nguyen2023dsc}.


In our asynchronous setting, agent $i$ receives the round-$t$ actions of its neighbors at physical times $\tau_j(t)+\delta^j_{i}(t)$
where $\tau_j(t)$ is the physical time at which agent $j \in \mathcal{N}_i$ executes its round-$t$ action and $\delta^j_{i}(t)$ is the communication delay for the round $t$ information transmitted from agent $j$ to agent $i$.


Since all agents are running their own clocks, for each  $t$, every agent will likely execute its action at a different time. Thus, the global objective as in \cref{eq:problem1_obj} will lose its meaning. To this end, we define a new version of the time-varying submodular function and a corresponding global objective that accurately represents the asynchronous setting.

\subsection{Asynchronous reward definition and assumptions}

Because the environment evolves in continuous time while agents execute at different
physical instants, the reward of a round depends not only on \emph{which} actions were
taken but on \emph{when} each took effect. This motivates the following two definitions for establishing the reward for the asynchronous execution regime.

\begin{definition}[Action-instant ground set]
\label{def:groundset}
The \emph{action-instant} ground set is
\begin{equation}
\label{eq:groundset}
\mathcal{E} \;\triangleq\; \mathcal{V}\times\mathbb{R}_{\ge0},
\end{equation}
whose elements $e=(a,\tau)$ denote ``action $a$ executed at physical time $\tau$''. A
\emph{deployment schedule} is a finite subset $\mathcal{D}\subseteq\mathcal{E}$.
\end{definition}
 
Here $\mathcal{V}=\bigsqcup_{i\in\mathcal{N}}\mathcal{V}_i$ is a disjoint union, so
distinct agents contribute distinct elements even when
they select the same action at the same instant.
 
\begin{definition}[Time-stamped reward function]
\label{def:reward}
The \emph{time-stamped reward function} is a set function on the action-instant ground set,
\begin{equation}
\label{eq:reward}
F : 2^{\mathcal{E}} \longrightarrow \mathbb{R}_{\ge0},
\end{equation}
assigning to each deployment schedule $\mathcal{D}$ the reward $F(\mathcal{D})$ incurred by
executing its actions at their associated instants. Its marginals are written
$F(e\mid S)\triangleq F(S\cup\{e\})-F(S)$ for $e\in\mathcal{E}$ and $S\subseteq\mathcal{E}$.
\end{definition}
 
One feature of Definition~\ref{def:reward} worth stating explicitly is that $F$ is an ordinary set function, so no
ordering of agents by execution time is required to define the round reward, which is a point we
return to in Remark~\ref{rem:ordering}.

To make this reward consistent with the synchronous setting, and to allow for regret analysis, we also have the following submodularity and time-lipschitzness conditions.
 
For a fixed $\tau\ge0$ we write the \emph{synchronous restriction}
\begin{equation}
F|_\tau(A) \;\triangleq\; F\big(\{(a,\tau) : a\in A\}\big),
\qquad A\subseteq\mathcal{V},
\end{equation}
to denote the synchronous version of the reward that comes from the corresponding actions being executed at some common time $\tau$.
 
\begin{assumption}[Submodularity]
\label{asm:A}
For each fixed $\tau\ge0$, the synchronous restriction $F|_\tau$ is normalized, non-decreasing,
submodular, and 2nd-order submodular on $2^{\mathcal{V}}$ in the sense of
\Cref{def:submodular} and \cref{eq:second_order_submod}.
\end{assumption}
 
\begin{assumption}[Boundedness]
\label{asm:B}
$F$ is normalized and non-decreasing on $\mathcal{E}$, and there is a known finite
$R_{\max}>0$ with $F(\mathcal{D})\le R_{\max}$ for every deployment schedule
$\mathcal{D}\subseteq\mathcal{E}$ with $|\mathcal{D}|\le|\mathcal{N}|$.
\end{assumption}

Monotonicity and normalization place every marginal in $[0,R_{\max}]$, so after rescaling
$F$ by $1/R_{\max}$ --- which we assume throughout without further comment --- the per-agent
rewards lie in $[0,1]$ as Theorem~\ref{thm:per_agent} requires. Requiring
$R_{\max}$ to be known rather than merely finite is what allows Algorithm~\ref{alg:async_DOG} to
form its importance-weighted estimates online. This is the boundedness assumption already
adopted in Section~\ref{sec:problem}, where such a conservative bound is taken to follow from knowledge of worst-case environment dynamics.
 
To quantify the effect of clock mismatch we bound how much the reward can oscillate as
execution times move. Write $S\sim_\rho S'$ if $S,S'\subseteq\mathcal{E}$ admit an action-preserving
bijection whose paired timestamps differ by at most $\rho$.
 
\begin{assumption}[Bounded temporal oscillation]
\label{asm:C}
The quantities
\begin{align}
\label{eq:osc1}
\mathrm{osc}_1(\rho) &\;\triangleq\; \sup_{\substack{a\in\mathcal{V},\ S\subseteq\mathcal{E}\\ |\tau-\tau'|\le\rho}}
\big|F\big((a,\tau)\mid S\big)-F\big((a,\tau')\mid S\big)\big|,\\[2pt]
\label{eq:osc2}
\mathrm{osc}_2(\rho;m) &\;\triangleq\; \sup_{\substack{e\in\mathcal{E},\ S\sim_\rho S'\\ |S|=m}}
\big|F(e\mid S)-F(e\mid S')\big|,
\end{align}
are finite for every $\rho\ge0$ and every integer $m\ge0$.
\end{assumption}
 
$\mathrm{osc}_1$ bounds the change in one action's marginal contribution when only that
action's execution time is perturbed. $\mathrm{osc}_2$ bounds the change when the entire
surrounding context is re-timestamped in a single move, holding the action fixed. The second argument in $\mathrm{osc}_2(\rho;m)$ records the size of the context being re-timestamped. Where a single
network-wide constant is needed we abbreviate
$\mathrm{osc}_2(\rho)\triangleq\max_{i\in\mathcal{N}}\mathrm{osc}_2(\rho;|\mathcal{N}_i|)$.

Since both $\mathrm{osc}_1$ and $\mathrm{osc}_2$ are supremum over sets that increase in size with $\rho$, both oscillation bounds are non-decreasing in $\rho$, which implies that the
limits
\begin{equation}
\label{eq:floor}
c_k \;\triangleq\; \mathrm{osc}_k(0^+) \;=\; \inf_{\rho>0}\mathrm{osc}_k(\rho),
\qquad k=1,2,
\end{equation}
exist without further assumption. Assumption~\ref{asm:C} asks only for
finiteness and the values $c_1,c_2$ are based on the specific $F$ being considered. Thus, $c_1,c_2$ classify the reward as described in Remark~\ref{rem:dichotomy}.
 
\begin{remark}[Why an oscillation bound rather than a Lipschitz constant]
\label{rem:whyosc}
Assumption~\ref{asm:C} replaces the evaluation-time and deployment-time Lipschitz
conditions used in the preliminary version of this work. A Lipschitz condition is the
special case $\mathrm{osc}(\rho)=L\rho$, and it is \emph{not} available in general: for a
coverage reward the value jumps by a full marginal when a target crosses a
field-of-view boundary, so no finite $L$ exists. Assumption~\ref{asm:C} asks only that the
oscillation be finite, which such a reward satisfies --- a single crossing changes the
reward by one target --- and it therefore covers rewards that admit no Lipschitz constant
at all. What is lost in that case is not the bound but its behavior as $\rho\to0$ (see
\Cref{rem:dichotomy} and \Cref{sec:sim-info}).
\end{remark}
 
\begin{remark}[Two regimes]
\label{rem:dichotomy}
The floors $c_1, c_2$ of \cref{eq:floor} classify $F$ into two regimes, and every result below holds in both. If $c_1 = c_2 = 0$, the reward is continuous in the execution times and the asynchrony penalty of \Cref{thm:async} vanishes as the clock mismatch $\rho \to 0$, so tighter synchronization buys proportionate performance at the rate at which $\mathrm{osc}_k$ vanishes. If instead $c_1 > 0$ or $c_2 > 0$, the reward can jump when a discontinuity falls inside the execution window. Synchronizing the agents exactly still removes the penalty, since $\mathrm{osc}_k(0) = 0$ for every $F$, but the approach to that limit is abrupt: once $\rho$ is small enough that $\mathrm{osc}_k$ has flattened, tightening the clocks further buys nothing, and the average regret retains a term that does not vanish with the horizon $T$ by \Cref{lem:conversion}. This floor is a worst-case quantity, though, and is reached only when a discontinuity actually falls inside the window, so in practice the penalty may be far smaller.
\end{remark}

 
\subsection{Asynchronous regret analysis}
 
Fix a round $t$ and a reference instant $\tau_{\mathrm{ref}}(t)$ satisfying
$|\tau_i(t)-\tau_{\mathrm{ref}}(t)|\le\rho$ for all $i\in\mathcal{N}$. Define
\begin{align}
\mathcal{D}_t &\triangleq \{(a_{i,t},\tau_i(t))\}_{i\in\mathcal{N}},
&&\text{(realized schedule)}\\
\bar{\mathcal{D}}_t &\triangleq \{(a_{i,t},\tau_{\mathrm{ref}}(t))\}_{i\in\mathcal{N}},
&&\text{(synchronous counterfactual)}
\end{align}
so that $F(\mathcal{D}_t)$ is the \emph{asynchronous global reward} for round $t$ and
$F(\bar{\mathcal{D}}_t)=f_t(\{a_{i,t}\}_{i\in\mathcal{N}})$ is the synchronous reward of
the same actions. For agent $i$ and any candidate action $a\in\mathcal{V}_i$ we write
\begin{align}
r^{\mathrm{async}}_{i,t}(a) &\triangleq F\big((a,\tau_i(t))\mid \{(a_{j,t},\tau_j(t))\}_{j\in\mathcal{N}_i}\big),\\
r^{\mathrm{sync}}_{i,t}(a)  &\triangleq f_t\big(a \mid \{a_{j,t}\}_{j\in\mathcal{N}_i}\big).
\end{align}
Agent $i$ observes $r^{\mathrm{async}}_{i,t}(a_{i,t})$, whereas the static regret in
\cref{eq:static_regret_agent_i} is stated in terms of $r^{\mathrm{sync}}_{i,t}$. We therefore
carry both, writing for $\bullet\in\{\mathrm{sync},\mathrm{async}\}$
\begin{equation}
\label{eq:regsyncasync}
\mathrm{Reg}^{\bullet}_T \;\triangleq\;
\max_{a\in\mathcal{V}_i}\sum_{t=1}^{T} r^{\bullet}_{i,t}(a)
\;-\;\sum_{t=1}^{T} r^{\bullet}_{i,t}(a_{i,t}),
\end{equation}
suppressing the agent index. Both compare the played sequence $\{a_{i,t}\}_{t\in[T]}$
against the best fixed action in hindsight, and differ only in which reward function scores
it: $\mathrm{Reg}^{\mathrm{sync}}_T$ is exactly the static regret of
\cref{eq:static_regret_agent_i}, while $\mathrm{Reg}^{\mathrm{async}}_T$ is the
regret with respect to the rewards the agent actually observes, which is what
Theorem~\ref{thm:per_agent} bounds. In the synchronous setting $\tau_i(t)=\tau_{\mathrm{ref}}$
for all $i$ and the two coincide.
 
\begin{remark}[No execution-time ordering is required]
\label{rem:ordering}
$F(\mathcal{D}_t)$ is a set function evaluated on a set, so the round reward is
independent of any ordering of the agents. It may still be decomposed into per-agent
marginals along \emph{any} permutation $\pi$,
\begin{equation}
F(\mathcal{D}_t)=\sum_{k=1}^{|\mathcal{N}|}
F\big(e_{\pi(k)}\mid\{e_{\pi(1)},\dots,e_{\pi(k-1)}\}\big),
\end{equation}
which telescopes for every $\pi$. Ordering by execution time in proofs for the following lemmas and theorems is not part of the definition, and near-simultaneous executions or arbitrary
tie-breaking do not affect the round reward.
\end{remark}

\begin{lemma}[Global gap]
\label{lem:global}
Under Assumption~\ref{asm:C}, for every round $t$,
\begin{equation}
\big|F(\mathcal{D}_t)-F(\bar{\mathcal{D}}_t)\big| \;\le\; |\mathcal{N}|\,\mathrm{osc}_1(\rho).
\end{equation}
\end{lemma}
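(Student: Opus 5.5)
We re-timestamp one agent at a time, moving from the realized schedule $\mathcal{D}_t$ to the synchronous counterfactual $\bar{\mathcal{D}}_t$, and charge each move to $\mathrm{osc}_1(\rho)$. By \Cref{rem:ordering}, the order of the agents is irrelevant, so we fix any enumeration $1,\dots,|\mathcal{N}|$. We define the hybrid schedules
\[
\mathcal{H}_k \triangleq \{(a_{i,t},\tau_{\mathrm{ref}}(t))\}_{i\le k}\cup\{(a_{i,t},\tau_i(t))\}_{i>k},\qquad k=0,\dots,|\mathcal{N}|,
\]
so that $\mathcal{H}_0=\mathcal{D}_t$ and $\mathcal{H}_{|\mathcal{N}|}=\bar{\mathcal{D}}_t$. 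Telescoping and the triangle inequality then give
\[
\big|F(\mathcal{D}_t)-F(\bar{\mathcal{D}}_t)\big|\le\sum_{k=1}^{|\mathcal{N}|}\big|F(\mathcal{H}_{k-1})-F(\mathcal{H}_k)\big|.
\]

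Each summand involves two schedules that differ only in the timestamp of agent $k$'s element. We let $S_k\triangleq\mathcal{H}_{k-1}\setminus\{(a_{k,t},\tau_k(t))\}$; this set also equals $\mathcal{H}_k\setminus\{(a_{k,t},\tau_{\mathrm{ref}}(t))\}$. Writing $F(\mathcal{H}_{k-1})=F(S_k)+F((a_{k,t},\tau_k(t))\mid S_k)$, and doing the same for $\mathcal{H}_k$, the common term $F(S_k)$ cancels. What remains is
\[
\big|F((a_{k,t},\tau_k(t))\mid S_k)-F((a_{k,t},\tau_{\mathrm{ref}}(t))\mid S_k)\big|.
\]
Since $|\tau_k(t)-\tau_{\mathrm{ref}}(t)|\le\rho$, this quantity is at most $\mathrm{osc}_1(\rho)$ by \cref{eq:osc1}, which is finite by Assumption~\ref{asm:C}. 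Summing the $|\mathcal{N}|$ terms yields the claim.

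The only delicate point is the set bookkeeping, namely that $S_k$ does not already contain the element being added. Because $\mathcal{V}$ is a disjoint union over agents (\Cref{def:groundset}), agent $k$'s elements are distinct from every other agent's elements. Hence $(a_{k,t},\tau)\notin S_k$ for either timestamp, the marginal decomposition is exact, and the supremum in \cref{eq:osc1} (taken over arbitrary $S\subseteq\mathcal{E}$) applies directly.
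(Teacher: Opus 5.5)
Your proof is correct and follows essentially the same route as the paper: the same hybrid schedules $\mathcal{H}_k$, the same common part $S_k$ with $F(S_k)$ cancelling, each step bounded by $\mathrm{osc}_1(\rho)$, then telescoping with the triangle inequality. Your extra bookkeeping remark, that the disjoint union in \Cref{def:groundset} makes $S_k$ well defined, is a detail the paper leaves implicit.
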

 
\begin{lemma}[Marginal gap]
\label{lem:marginal}
Under Assumption~\ref{asm:C}, for every agent $i$ and round $t$,
\begin{equation}
\gamma_{i,t} \;\triangleq\; \max_{a\in\mathcal{V}_i}\big|r^{\mathrm{async}}_{i,t}(a)-r^{\mathrm{sync}}_{i,t}(a)\big|
\;\le\; \mathrm{osc}_1(\rho)+\mathrm{osc}_2\big(\rho;|\mathcal{N}_i|\big).
\end{equation}
\end{lemma}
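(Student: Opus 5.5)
The plan is a two-step triangle inequality through an intermediate quantity: keep agent $i$'s own action at its realized time, but move the neighbors' context to the reference instant (equivalently, to the synchronous counterfactual). Fix an agent $i$, a round $t$, and a candidate action $a\in\mathcal{V}_i$. Abbreviate the realized neighbor context as $S\triangleq\{(a_{j,t},\tau_j(t))\}_{j\in\mathcal{N}_i}$ and the synchronous neighbor context as $\bar S\triangleq\{(a_{j,t},\tau_{\mathrm{ref}}(t))\}_{j\in\mathcal{N}_i}$. By the definition of the synchronous restriction, $r^{\mathrm{sync}}_{i,t}(a)=F\big((a,\tau_{\mathrm{ref}}(t))\mid\bar S\big)$, since $F(\bar{\mathcal{D}}_t)=f_t(\cdot)$ means $f_t$ is $F|_{\tau_{\mathrm{ref}}(t)}$ and marginals of $f_t$ are marginals of $F$ on $\tau_{\mathrm{ref}}(t)$-stamped sets. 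Then
\begin{align*}
\big|r^{\mathrm{async}}_{i,t}(a)-r^{\mathrm{sync}}_{i,t}(a)\big|
\le\;& \big|F((a,\tau_i(t))\mid S)-F((a,\tau_{\mathrm{ref}}(t))\mid S)\big|\\
&+\big|F((a,\tau_{\mathrm{ref}}(t))\mid S)-F((a,\tau_{\mathrm{ref}}(t))\mid \bar S)\big|.
\end{align*}

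The first term re-times only the agent's own action, with $S$ held fixed. Since $|\tau_i(t)-\tau_{\mathrm{ref}}(t)|\le\rho$ by the choice of reference instant, this term is at most $\mathrm{osc}_1(\rho)$ directly from \cref{eq:osc1}.

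The second term holds the element $e=(a,\tau_{\mathrm{ref}}(t))$ fixed and re-timestamps the whole context. The map $(a_{j,t},\tau_j(t))\mapsto(a_{j,t},\tau_{\mathrm{ref}}(t))$ is action-preserving, and each timestamp moves by at most $\rho$. It is also a bijection, because $\mathcal{V}$ is a disjoint union, so distinct neighbors give distinct elements even when they share an action and an instant. Hence $S\sim_\rho\bar S$ with $|S|=|\mathcal{N}|_i$-many elements, that is $|S|=|\mathcal{N}_i|$, and the second term is at most $\mathrm{osc}_2(\rho;|\mathcal{N}_i|)$ by \cref{eq:osc2}.

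Adding the two bounds gives, for each $a$, a bound independent of $a$. Taking the maximum over $a\in\mathcal{V}_i$ then yields the stated bound on $\gamma_{i,t}$.

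The main obstacle is bookkeeping rather than analysis. First, one must check that $r^{\mathrm{sync}}_{i,t}(a)$ really equals the $F$-marginal on $\tau_{\mathrm{ref}}(t)$-stamped sets, which follows from $f_t=F|_{\tau_{\mathrm{ref}}(t)}$. Second, one must check that the disjoint-union convention makes the re-timestamping a genuine bijection of the required cardinality. Note that the ordering of the two steps matters. If we instead re-timed the context first and the own action second, the $\mathrm{osc}_1$ step would be applied with context $\bar S$; that is equally valid because $\mathrm{osc}_1$ takes a supremum over all $S$.
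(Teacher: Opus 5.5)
Your proposal is correct and matches the paper's proof: the paper inserts the same intermediate term $F(\bar e\mid S)$, with $\bar e=(a,\tau_{\mathrm{ref}})$ and $S$ the realized neighbor context. It bounds the own-action re-timing bracket by $\mathrm{osc}_1(\rho)$, bounds the context re-timestamping bracket (with $S\sim_\rho\bar S$, both of size $|\mathcal{N}_i|$) by $\mathrm{osc}_2(\rho;|\mathcal{N}_i|)$, and concludes by the triangle inequality; your extra checks that $f_t=F|_{\tau_{\mathrm{ref}}(t)}$ and that the disjoint-union convention makes the re-timestamping a genuine bijection only make explicit what the paper takes for granted.
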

 
\Cref{lem:global} bounds the gap between the asynchronous and synchronous global rewards for a single round, with the played actions held fixed, whereas \Cref{lem:marginal} bounds the corresponding gap in the marginal reward local to each agent. The two differ in more than scope. The maximum in \Cref{lem:marginal} runs over all candidate actions in $\mathcal{V}_i$, including those that the agent never plays and whose rewards it therefore never observes. This is what \Cref{lem:conversion} requires, since that argument is evaluated at the best action in hindsight, and a bound covering only played actions cannot support it.
 
 
 
\begin{lemma}[Regret conversion]
\label{lem:conversion}
Fix agent $i$, let $\gamma\triangleq\max_{t\in[T]}\gamma_{i,t}$, and let
$\mathrm{Reg}^{\mathrm{sync}}_T$ and $\mathrm{Reg}^{\mathrm{async}}_T$ be as
in \cref{eq:regsyncasync}. Then,
\begin{equation}
\mathrm{Reg}^{\mathrm{sync}}_T \;\le\; \mathrm{Reg}^{\mathrm{async}}_T + 2\sum_{t=1}^{T}\gamma_{i,t}
\;\le\; \mathrm{Reg}^{\mathrm{async}}_T + 2T\gamma .
\end{equation}
\end{lemma}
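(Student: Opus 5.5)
The argument is elementary: compare the two regrets term by term, using the per-round, per-action bound $|r^{\mathrm{async}}_{i,t}(a)-r^{\mathrm{sync}}_{i,t}(a)|\le\gamma_{i,t}$ from the definition in \Cref{lem:marginal}. Because that bound holds uniformly over every $a\in\mathcal{V}_i$, it applies both to the comparator (the best fixed action in hindsight) and to the played actions $a_{i,t}$. No probabilistic or learning argument is needed; the inequality holds pathwise, for every realization of the algorithm's randomness.

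Let $a^\star\in\arg\max_{a\in\mathcal{V}_i}\sum_{t=1}^T r^{\mathrm{sync}}_{i,t}(a)$ be the synchronous comparator. Then
\begin{equation*}
\sum_{t=1}^T r^{\mathrm{sync}}_{i,t}(a^\star)\le \sum_{t=1}^T r^{\mathrm{async}}_{i,t}(a^\star)+\sum_{t=1}^T\gamma_{i,t}\le \max_{a\in\mathcal{V}_i}\sum_{t=1}^T r^{\mathrm{async}}_{i,t}(a)+\sum_{t=1}^T\gamma_{i,t}.
\end{equation*}
For the played sequence, apply the same bound in the opposite direction at $a=a_{i,t}$:
\begin{equation*}
-\sum_{t=1}^T r^{\mathrm{sync}}_{i,t}(a_{i,t})\le -\sum_{t=1}^T r^{\mathrm{async}}_{i,t}(a_{i,t})+\sum_{t=1}^T\gamma_{i,t}.
\end{equation*}
Adding these two inequalities and recalling \cref{eq:regsyncasync} gives $\mathrm{Reg}^{\mathrm{sync}}_T\le\mathrm{Reg}^{\mathrm{async}}_T+2\sum_{t=1}^T\gamma_{i,t}$. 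The second inequality in the lemma follows from $\gamma_{i,t}\le\gamma$ for all $t\in[T]$.

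The only step needing care is the comparator. The maximizers of the synchronous and asynchronous cumulative rewards may differ. We therefore fix the synchronous maximizer $a^\star$ and upper-bound its asynchronous value by the asynchronous maximum. This is exactly where the uniformity of $\gamma_{i,t}$ over unplayed actions matters, as remarked after \Cref{lem:marginal}. Otherwise the proof is a two-line telescoping argument. If a bound in expectation is wanted, take expectations of both sides; the inequality is preserved by linearity.
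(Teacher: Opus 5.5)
Your proof is correct and follows the paper's argument step for step. You fix the synchronous maximizer $a^\star$ and bound its synchronous sum by the asynchronous maximum plus $\sum_t\gamma_{i,t}$. You then apply the uniform bound in the opposite direction at the played actions, add the two inequalities, and finish with $\gamma_{i,t}\le\gamma$.
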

 
Lemma~\ref{lem:conversion} connects the two halves of the analysis by converting the
regret Theorem~\ref{thm:per_agent} bounds into the regret Theorem~\ref{thm:approx_performance} consumes.
Note that it compares a single played sequence under two reward functions rather than comparing two independent synchronous and asynchronous runs. The bound holds for every realization of timestamps, actions, and algorithmic randomness, so it passes through
expectations without further argument.
 
\begin{theorem}[Approximation performance under asynchrony]
\label{thm:async}
Let $\widetilde{\mathrm{osc}}(\rho)\triangleq\max\{\mathrm{osc}_1(\rho),\mathrm{osc}_2(\rho)\}$.
Under \Cref{asm:A,asm:B,asm:C}, each bound in
\cref{eq:thm1_centralized,eq:thm1_decentralized,eq:thm1_intermediate} remains valid with
$f_t(\mathcal{A}_t)$ replaced by $F(\mathcal{D}_t)$ and with an additional loss
\begin{equation}
\label{eq:Gamma}
\Gamma(\rho)\le\frac{2|\mathcal{N}|}{1+\kappa_f}\big(\mathrm{osc}_1+\mathrm{osc}_2\big)(\rho)
+|\mathcal{N}|\,\mathrm{osc}_1(\rho)
\le 5\,|\mathcal{N}|\,\widetilde{\mathrm{osc}}(\rho)
\end{equation}
subtracted from the right-hand side. In particular
$\Gamma(\rho)=O\big(|\mathcal{N}|\,\widetilde{\mathrm{osc}}(\rho)\big)$, with
$c_1,c_2$ as in \cref{eq:floor},
\begin{equation}
\lim_{\rho\to0}\Gamma(\rho)\;\le\;\frac{2|\mathcal{N}|}{1+\kappa_f}(c_1+c_2)+|\mathcal{N}|\,c_1 .
\end{equation}
\end{theorem}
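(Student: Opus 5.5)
The plan is to reuse the proof of \Cref{thm:approx_performance} verbatim at the level of the synchronous counterfactual $\bar{\mathcal{D}}_t$ and to pay for asynchrony at exactly two points.

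\textbf{Step 1: per-agent regret.} \Cref{thm:per_agent} controls $\mathbb{E}[\mathrm{Reg}^{\mathrm{async}}_T]$, since the agent observes and learns from $r^{\mathrm{async}}_{i,t}$. Feeding this into \Cref{lem:conversion} together with \Cref{lem:marginal} gives, for every agent $i$,
\[
\mathbb{E}[\mathrm{Reg}^{\mathrm{sync}}_T]/T \le \tilde{O}\bigl(\sqrt{|\mathcal{V}_i|\bar{M}_T^i/T}\bigr) + 2\bigl(\mathrm{osc}_1+\mathrm{osc}_2(\,\cdot\,;|\mathcal{N}_i|)\bigr)(\rho).
\]
The right-hand side is bounded by the same expression with the network-wide $\mathrm{osc}_2(\rho)$.

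\textbf{Step 2: rerun the synchronous argument.} I will redo the proof of \Cref{thm:approx_performance} for $f_t=F|_{\tau_{\mathrm{ref}}(t)}$, which satisfies \Cref{asm:A}. That proof is the \scenario{RAG}-style curvature/\scenario{coin} argument: telescope $f_t(\mathcal{A}_t)$, compare each agent's marginal with its neighborhood marginal using 2nd-order submodularity, invoke the per-agent static regret against $\mathcal{A}^{\opt}$, and divide by $1+\kappa_f$. The key point to check is that the per-agent regrets enter that proof linearly, with total coefficient $1/(1+\kappa_f)$ after rearranging.

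If this holds, substituting the bound from Step 1 turns the additive term $\sum_i \mathrm{Reg}^{\mathrm{sync}}_T/T$ into the old $\tilde{O}$ term plus
\[
\frac{2}{1+\kappa_f}\sum_{i\in\mathcal{N}}(\mathrm{osc}_1+\mathrm{osc}_2)(\rho) \le \frac{2|\mathcal{N}|}{1+\kappa_f}(\mathrm{osc}_1+\mathrm{osc}_2)(\rho).
\]
This yields each of the three bounds for $F(\bar{\mathcal{D}}_t)$.

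The decentralized case uses the factor $1-\kappa_f$ rather than $1/(1+\kappa_f)$. There the regret enters with coefficient at most $1$, so I would bound it by $2|\mathcal{N}|(\mathrm{osc}_1+\mathrm{osc}_2)$. To match the stated constant I would argue that the coefficient is also $\le 1/(1+\kappa_f)$ via the same rearrangement. Failing that, I would note that the final bound $5|\mathcal{N}|\widetilde{\mathrm{osc}}$ still holds, because $2\cdot 2+1=5$.

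\textbf{Step 3: return to the realized schedule.} \Cref{lem:global} gives $F(\mathcal{D}_t)\ge F(\bar{\mathcal{D}}_t)-|\mathcal{N}|\mathrm{osc}_1(\rho)$ pointwise. Averaging over $t$ and taking expectations, the left-hand sides become $F(\mathcal{D}_t)$ at the additional cost $|\mathcal{N}|\mathrm{osc}_1(\rho)$. This establishes the first inequality in \cref{eq:Gamma}.

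\textbf{Step 4: constants and the limit.} The second inequality follows from $\mathrm{osc}_1+\mathrm{osc}_2\le 2\widetilde{\mathrm{osc}}$ and $1/(1+\kappa_f)\le1$, which gives $4+1=5$. For the limit, both $\mathrm{osc}_k$ are non-decreasing, so $\mathrm{osc}_k(\rho)\to c_k$ as $\rho\to0^+$ by \cref{eq:floor}. Passing to the limit in the first bound gives the stated $\lim$ inequality.

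\textbf{Main obstacle.} I expect Step 2 to be the hard part. The synchronous proof must be checked to consume the regret only through $\mathrm{Reg}^{\mathrm{sync}}_T$ of \cref{eq:static_regret_agent_i}, evaluated against the comparator $a_i^{\opt}$. This is why \Cref{lem:marginal} must cover unplayed actions. Its optimal-solution side $f_t(\mathcal{A}^{\opt})$ is interpreted as $F|_{\tau_{\mathrm{ref}}(t)}(\mathcal{A}^{\opt})$, so no asynchrony cost arises there.
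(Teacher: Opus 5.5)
Your proposal is correct and follows essentially the paper's proof. You convert $\mathrm{Reg}^{\mathrm{async}}_T$ to $\mathrm{Reg}^{\mathrm{sync}}_T$ via \Cref{lem:conversion} and \Cref{lem:marginal}, push it through the rearranged chain of \Cref{thm:approx_performance} with coefficient $1/(1+\kappa_f)$, and then pay $|\mathcal{N}|\,\mathrm{osc}_1(\rho)$ via \Cref{lem:global} to pass from $f_t(\mathcal{A}_t)$ to $F(\mathcal{D}_t)$. Your hedge in the decentralized case resolves as you hoped: the regret there enters with coefficient $1-\kappa_f\le 1/(1+\kappa_f)$, and in addition $\mathrm{osc}_2(\rho;0)=0$, which is exactly the improvement the paper notes.
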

 
The penalty is linear in the number of agents and carries no explicit topology factor, but
it is not topology-independent. $\mathrm{osc}_2$ is evaluated at each agent's neighborhood
size, and \Cref{asm:A,asm:B,asm:C} do not determine how it varies with that size. We can show $\mathrm{osc}_2$ to be bounded as $\mathrm{osc}_2(\rho;m)\le 2m\,\mathrm{osc}_1(\rho)$. However, this is an upper bound only and
whether $\mathrm{osc}_2$ in fact grows, saturates, or decreases with $m$ is a property of
the reward function specific to the problem being considered.

\section{Simulations}
\label{sec:simulations}

We evaluate \alg in two settings. \Cref{sec:sim-coverage} reports the multi-camera coverage
task, which isolates the benefit of intermediate updates under increasing communication
delay. \Cref{sec:sim-info} reports a second, independent camera coverage simulation with a continuous-valued
information-gathering reward, built to test the asynchronous theory of
\Cref{sec:async} directly.

\subsection{Multi-camera coverage under communication delays}
\label{sec:sim-coverage}

\begin{figure}[t]
    \centering
    \includegraphics[width=0.85\columnwidth]{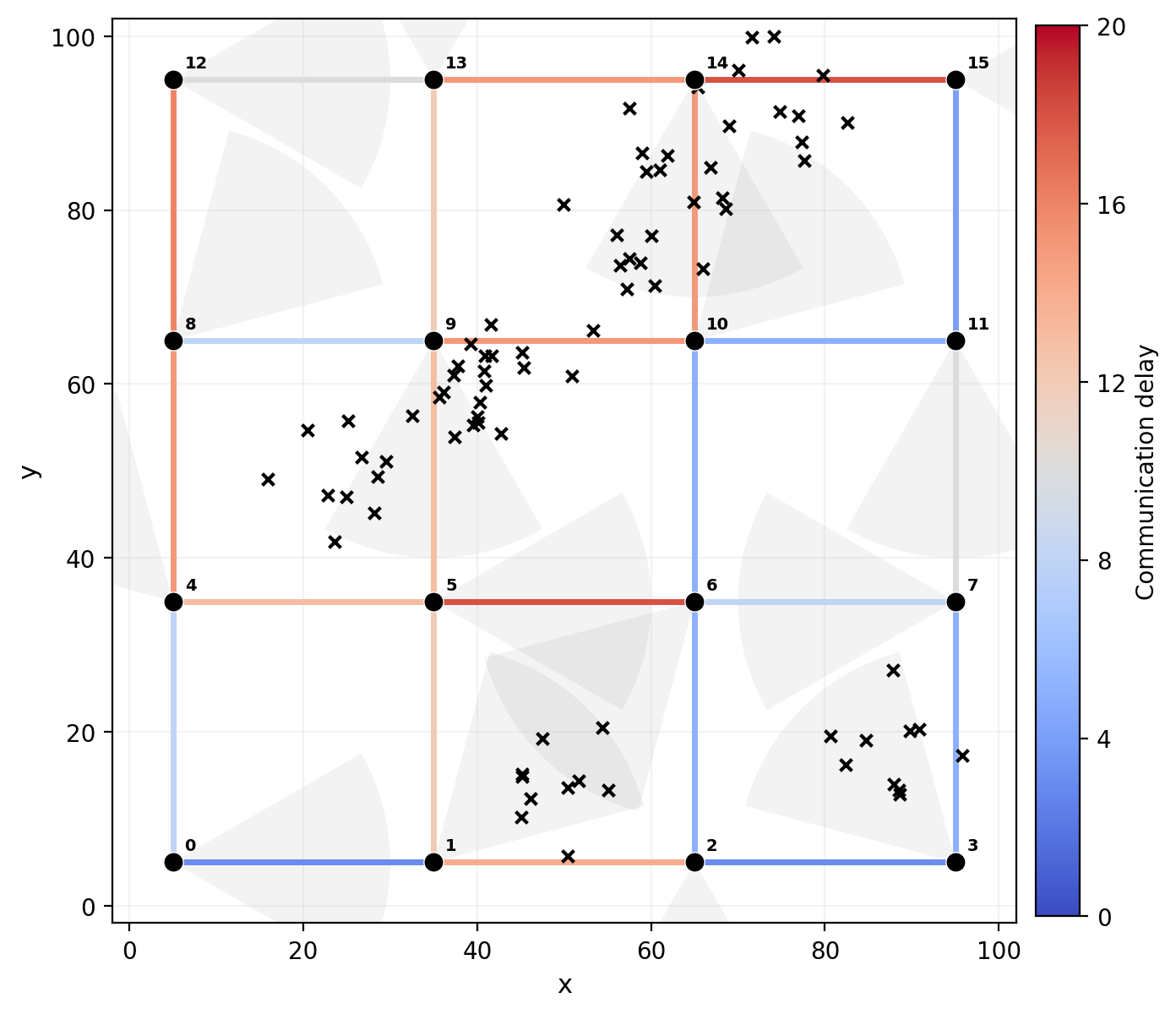}
    \vspace{-3.5mm}
    \caption{Simulation layout and sample snapshot of camera (black vertices) and target (black crosses) configuration. $16$ cameras are placed on a $4 \times 4$ grid over a $100 \times 100$ workspace. Each camera has a sector FOV with half-angle $30^\circ$ and sensing range of $20$ units (light gray wedges show the selected heading of each camera). Colored edges denote the one-hop communication links between grid neighbors, with warmer colors representing higher delays.}
    \label{fig:layout}
\end{figure}

\begin{figure*}[t]
    \centering
    \setlength{\tabcolsep}{2pt}

    \begin{subfigure}[t]{0.329\textwidth}
        \centering
        \includegraphics[width=\linewidth]{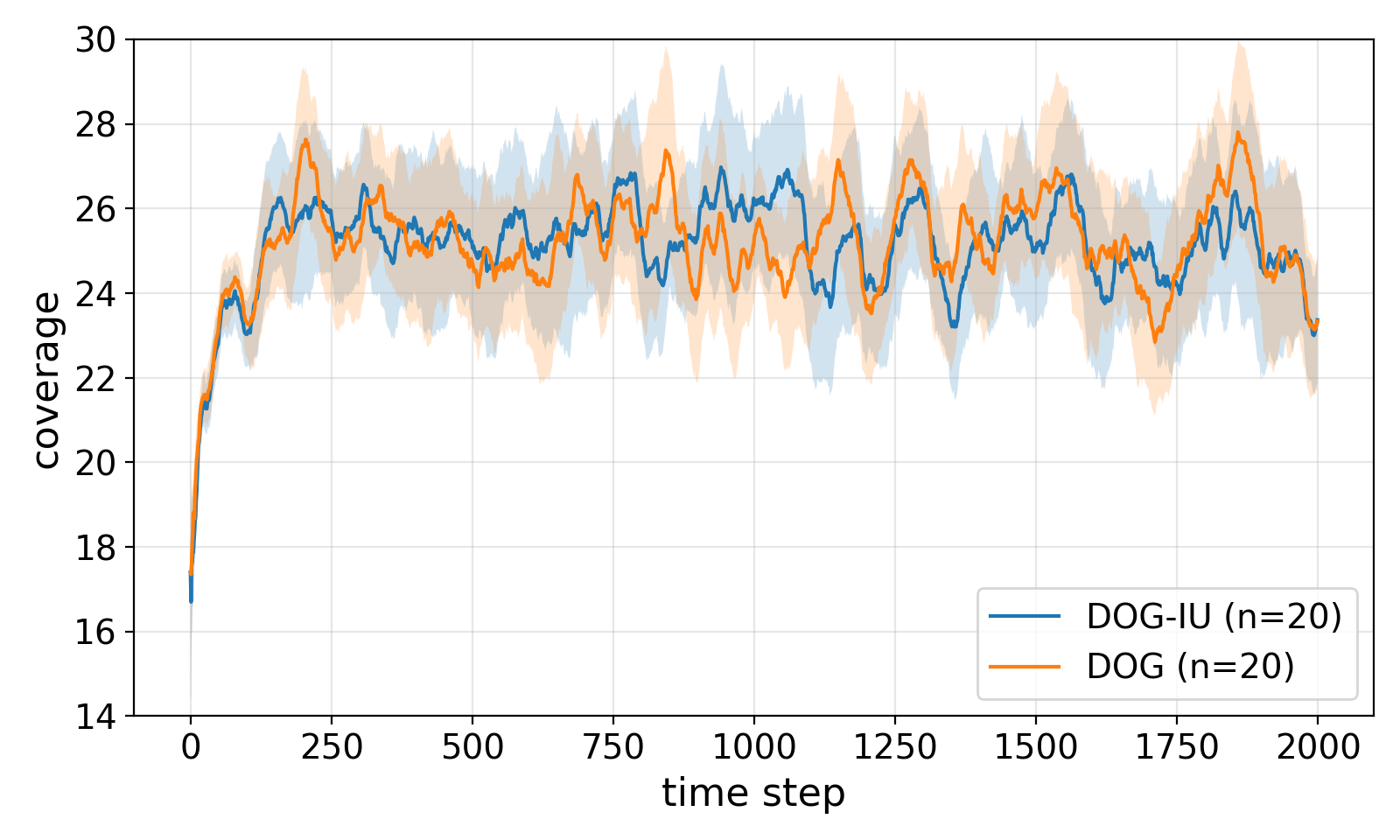}
        \vspace{-7mm}
        \caption{$\bar{d}=1$}
        \label{fig:cov_d1}
    \end{subfigure}
    \hfill
    \begin{subfigure}[t]{0.329\textwidth}
        \centering
        \includegraphics[width=\linewidth]{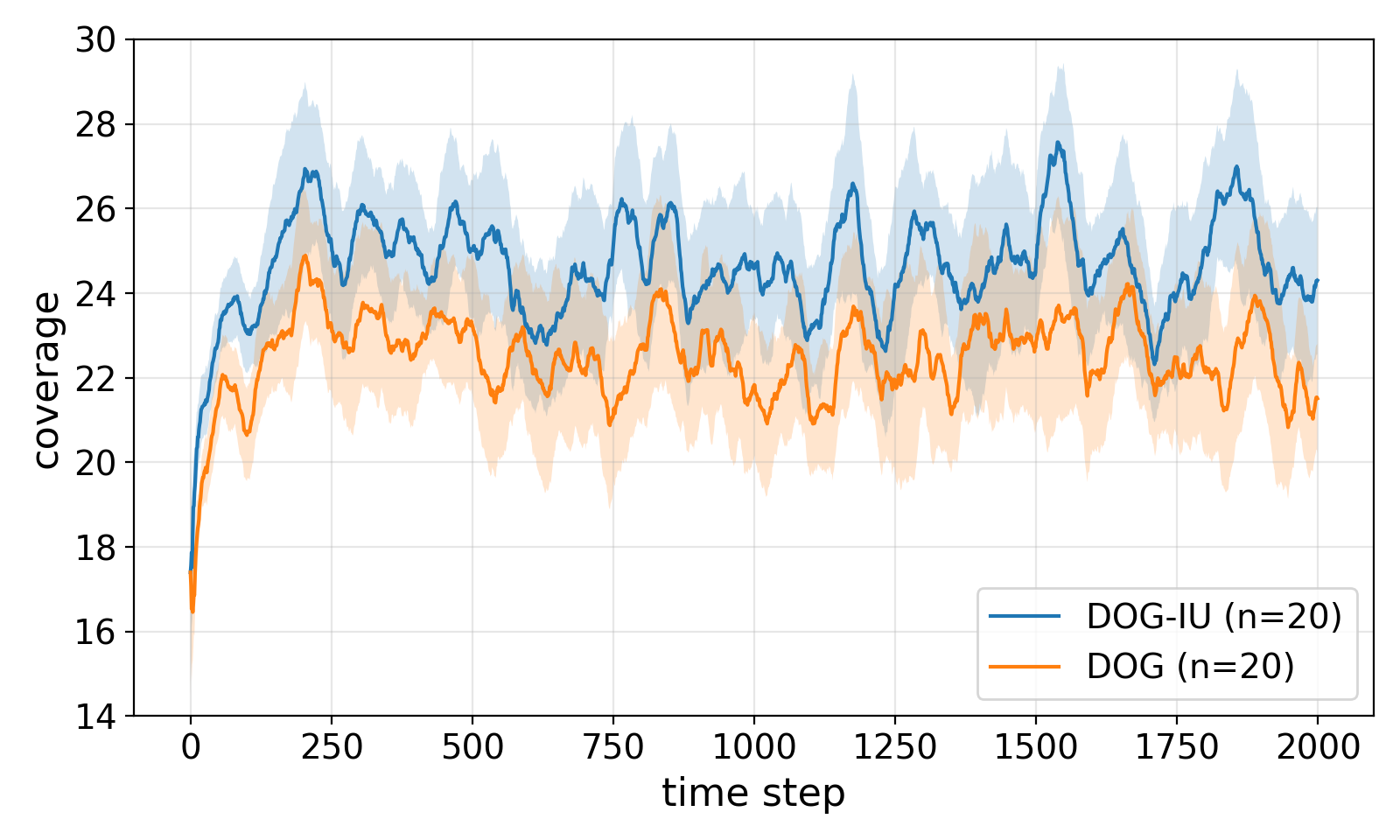}
        \vspace{-7mm}
        \caption{$\bar{d}=5$}
        \label{fig:cov_d5}
    \end{subfigure}
    \hfill
    \begin{subfigure}[t]{0.329\textwidth}
        \centering
        \includegraphics[width=\linewidth]{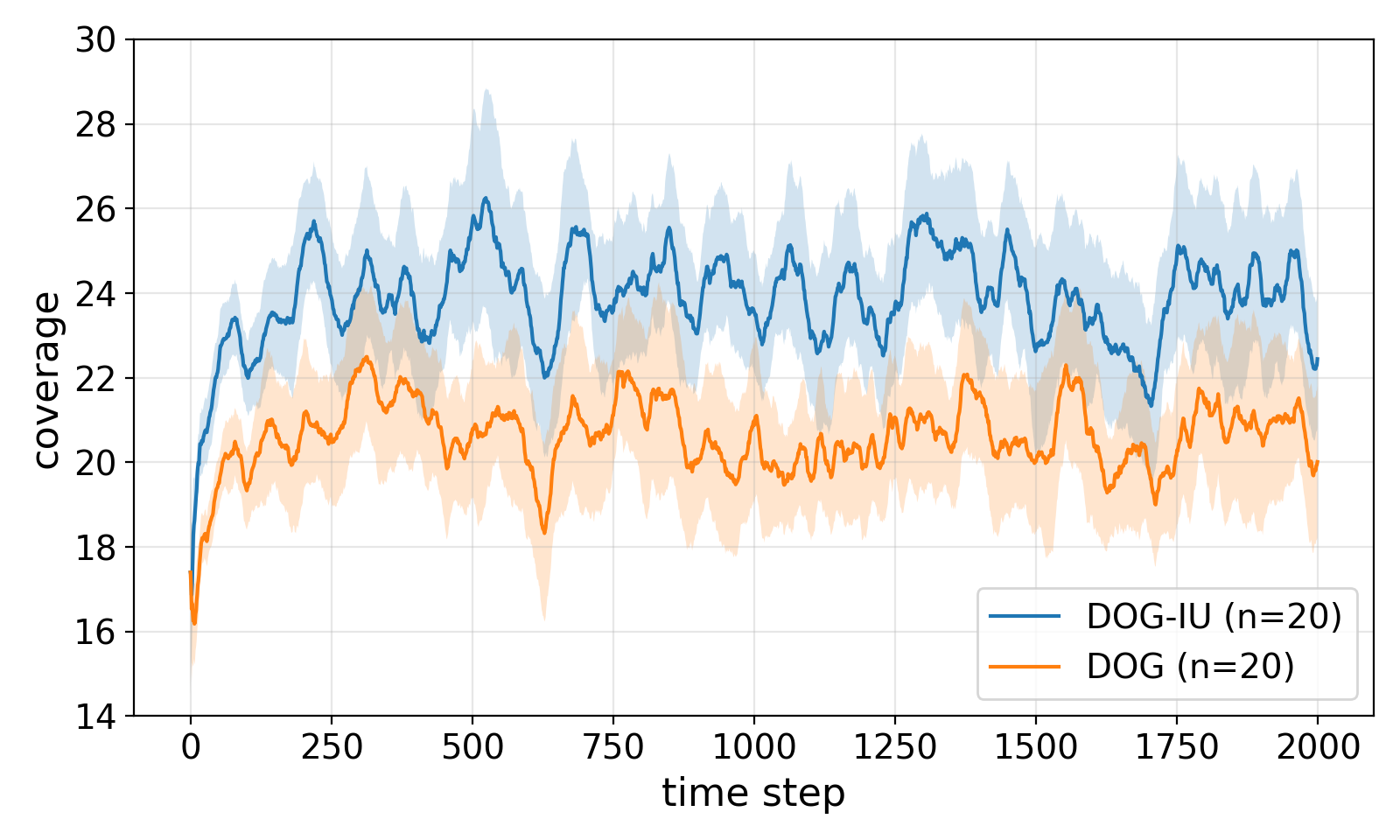}
        \vspace{-7mm}
        \caption{$\bar{d}=10$}
        \label{fig:cov_d10}
    \end{subfigure}

    \begin{subfigure}[t]{0.329\textwidth}
        \centering
        \includegraphics[width=\linewidth]{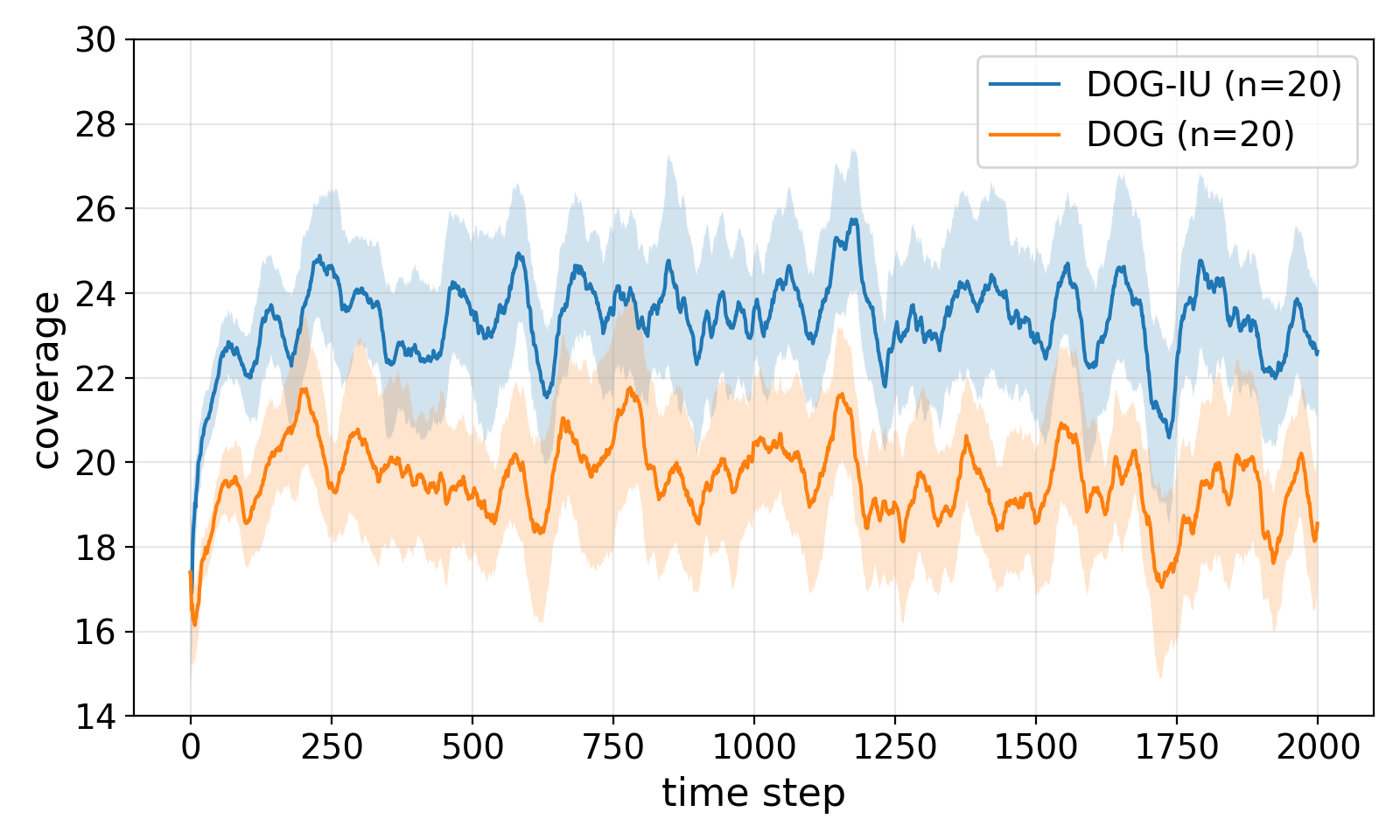}
        \vspace{-7mm}
        \caption{$\bar{d}=15$}
        \label{fig:cov_d15}
    \end{subfigure}
    \hfill
    \begin{subfigure}[t]{0.329\textwidth}
        \centering
        \includegraphics[width=\linewidth]{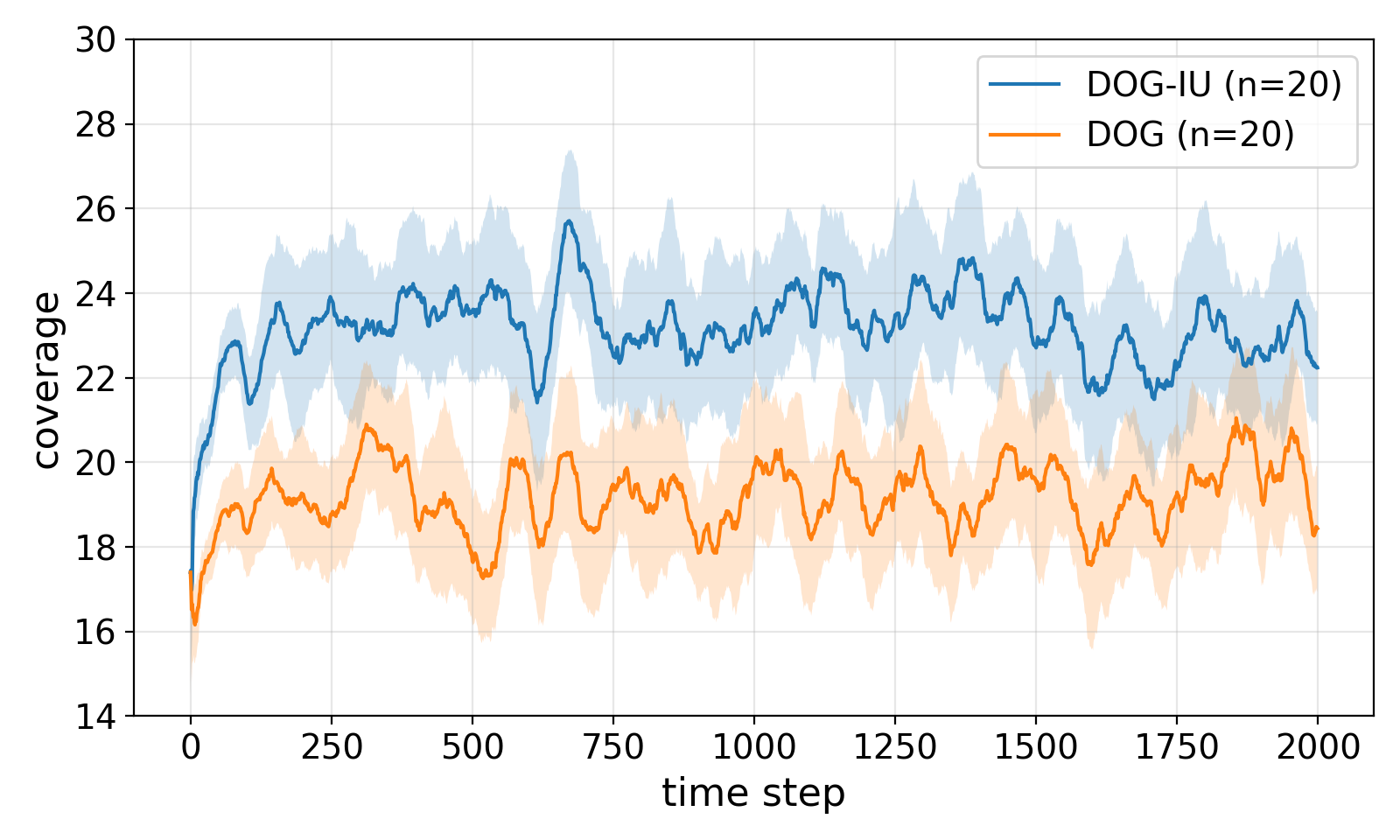}
        \vspace{-7mm}
        \caption{$\bar{d}=20$}
        \label{fig:cov_d20}
    \end{subfigure}
    \hfill
    \begin{subfigure}[t]{0.329\textwidth}
        \centering
        \includegraphics[width=\linewidth]{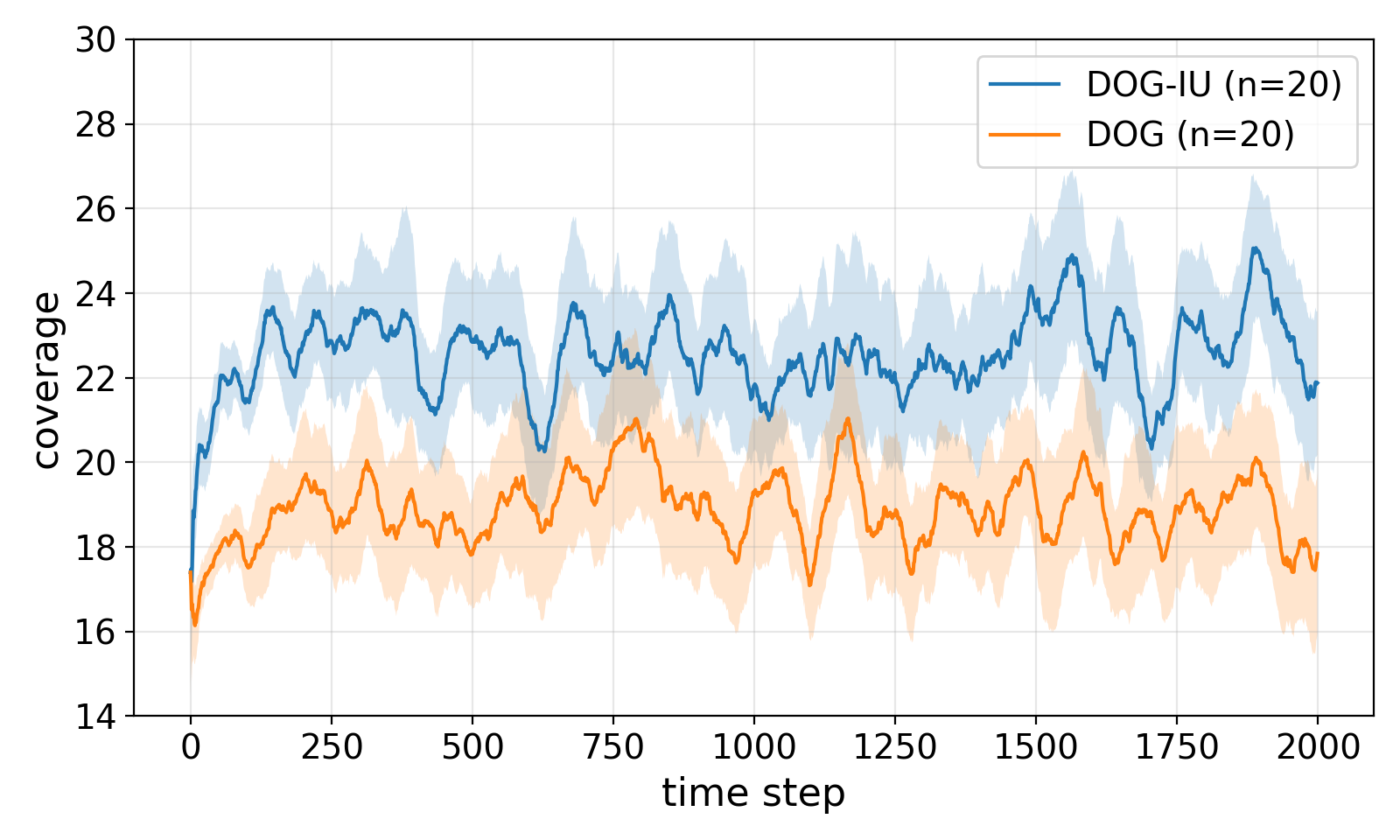}
        \vspace{-7mm}
        \caption{$\bar{d}=30$}
        \label{fig:cov_d30}
    \end{subfigure}
    \caption{Coverage over time (mean $\pm 95\%$ CI, $n=20$ runs, running average over $50$ time steps) for \alg and \scenario{DOG} under increasing maximum one-hop delay $\bar{d}$ on a $100\times 100$ workspace with $16$ grid-placed cameras, $8$ headings, and $80$ clustered targets (8 clusters). The gap widens as delay increases, illustrating the benefit of intermediate updates under more severe communication delays.}
    \label{fig:coverage}
    \vspace{-1.2em}
\end{figure*}

We evaluate \alg in the asynchronous setting, against the baseline \dog in the synchronous setting, under increasing communication delays, on a target-monitoring task. \dog applies the same \scenario{EXP3}-style update as \alg but defers all weight updates for round $t$ until the actions of \emph{all neighbors} for that round have been received.

\textbf{Setup.}
We consider $|\mathcal{N}| = 16$ cameras placed on a $100 \times 100$ workspace as shown in \Cref{fig:layout}. Each camera selects one of $|\calV_i| = 8$ discrete headings per round. The communication graph connects each agent to its immediate grid neighbors (colored edges in \Cref{fig:layout}). We restrict the cameras to one-hop communication.

\textbf{Targets.}
To induce a non-stationary coverage landscape, $80$ targets are organized into $8$ clusters. Each cluster shares a velocity vector of magnitude $1.0$ units/step whose heading is resampled every $30$ steps; individual targets receive i.i.d.\ Gaussian noise ($\sigma = 0.005$) and are reflected at boundaries.

\textbf{Delays and Learning Rate.}
Communication delays are sampled from a uniform distribution for each round, i.e., delays are i.i.d.\ $d^j_{i,t} \sim \mathrm{Unif}\{0, \ldots, \bar{d}\}$; \eg for $\bar{d} = 10$, the average delay for any given communication link will be 5 rounds. Both algorithms use a learning rate of $\eta_i = c\sqrt{\ln|V_i|/((|V_i| + \bar{d})T)}$. Since $\bar{M}_T$ cannot generally be known a priori, we use the learning rate of \scenario{DEW}/\dog, which is of a similar order. Additionally, a scaling factor of $c=14$ amplifies the per-update weight shift, benefiting \alg because its estimation-correction scheme provides more opportunities to react to changes in the environment. This scaling factor is required to make both algorithms adapt to the fast-changing environment.

\textbf{Asynchrony.}
We run \alg in asynchronous mode with a timing mismatch bound of $\rho = 0.3$, that is, the difference between the measurement/action execution times of any two agents will be within $0.3$ of the round duration. In terms of the simulation, we run a global clock $T_{global}$ and for each agent uniformly sample $\tau_i \sim (t_{global}-0.15,t_{global}+0.15)$ along with $\tau_{\mathrm{ref}}(t)=t_{global}$, so that \cref{eq:bounded_skew} holds by construction.

\textbf{Action Estimation.}
Each agent estimates the missing action for a neighbor as that neighbor's last known action.

\textbf{Results.}
Each configuration is evaluated over $n = 20$ Monte Carlo runs with $T = 2000$, using the same environment realization (random seed) for both algorithms. \Cref{fig:coverage} reports coverage trajectories (mean $\pm\, 95\%$ CI). For $\bar{d} = 1$ (\Cref{fig:coverage}a), \alg and \dog perform identically, confirming that even at small delays \alg matches \dog. As the delays increase to $\bar{d} = 5$ and beyond (\Cref{fig:coverage}b-f), a gap emerges between the two algorithms. At $\bar{d} = 10$, \dog defers each round's update by up to $10$ steps, during which the target cluster locations can change substantially ($10$ units of displacement). \alg begins updating immediately using reward estimates conditioned on the full neighborhood and corrects as true actions arrive. At $\bar{d} = 20$ and $\bar{d} = 30$ (\Cref{fig:coverage}e-f), we see that \dog is effectively not able to learn, while \alg is still able to maintain a performance gap of 3-5 targets, which is a roughly $20\%$ advantage. \dog's updates lag by up to $1\%$ of the horizon, while \alg's early estimates, despite being potentially incorrect for some rounds, steer the policy toward better actions before the environment shifts.

The asynchrony penalty of \Cref{thm:async} is not pronounced here. The coverage reward is integer-valued and changes only when a target crosses a field-of-view boundary, so it lies in the second regime of \Cref{rem:dichotomy}: its oscillation bounds are finite but do not vanish at the origin. The resulting floor is a worst-case quantity, attained only when such a crossing falls inside the execution window, and with slowly moving targets those configurations are rare. Isolating and measuring the asynchrony penalty therefore calls for a reward that is continuous in the execution times, which motivates the second simulation.

\subsection{Information gathering under asynchronous execution}
\label{sec:sim-info}

\begin{figure}[t]
    \centering
    \includegraphics[width=\columnwidth]{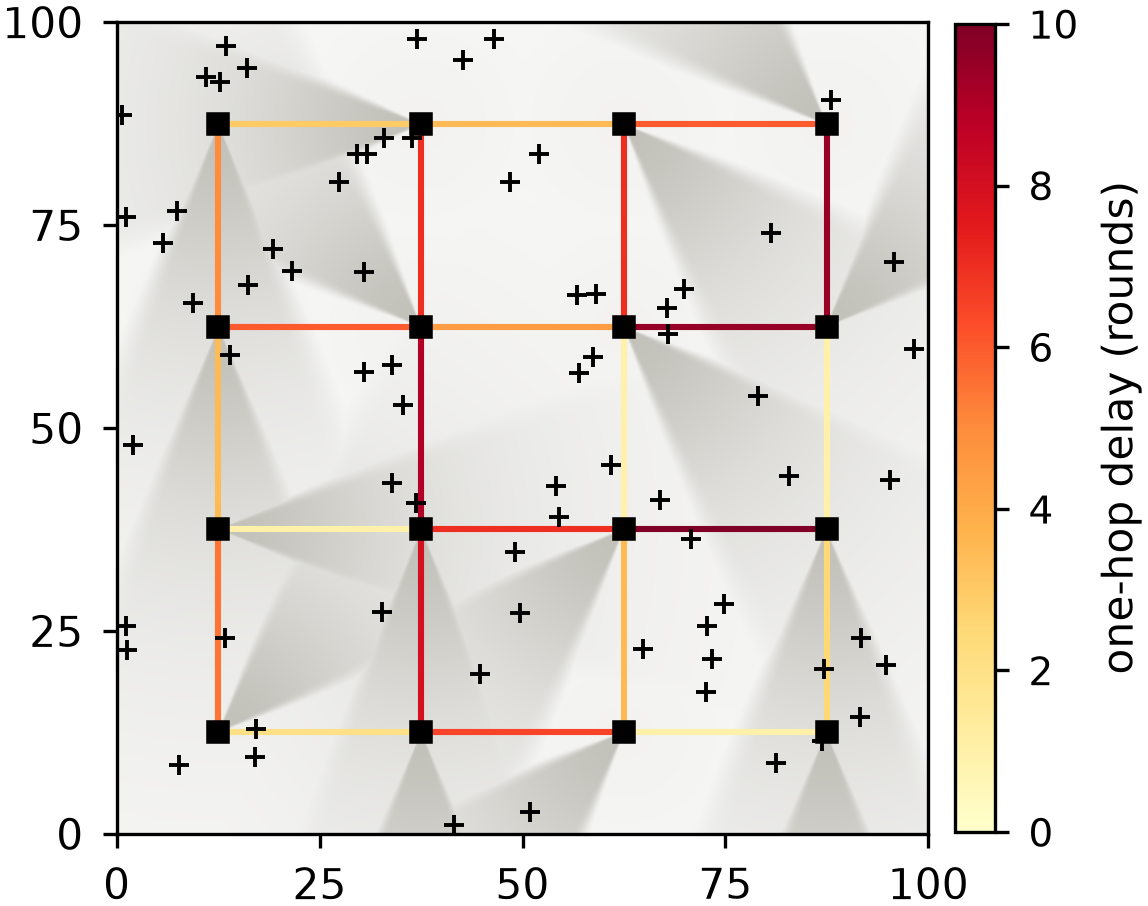}
    \vspace{-3.5mm}
    \caption{Layout and sample snapshot of the information-gathering simulation. $16$ agents (black squares) are fixed at the vertices of a $4\times4$ grid over a $100\times100$ workspace, each orienting a $45^\circ$ sensing sector. Gray shading indicates sensing accuracy: it tapers smoothly across the sector edge to the floor $g_{\mathrm{out}}$ and decays with range, rather than terminating at a hard field-of-view boundary, which is what makes the reward continuous in the execution times. $72$ targets (black crosses) in $8$ clusters follow linear-Gaussian dynamics with continuous-time maneuvers. Colored edges denote the one-hop communication links, with warmer colors representing higher delays.}
    \label{fig:info_layout}
\end{figure}

\begin{figure}[t]
    \centering
    \begin{subfigure}[t]{\columnwidth}
        \centering
        \includegraphics[width=\linewidth]{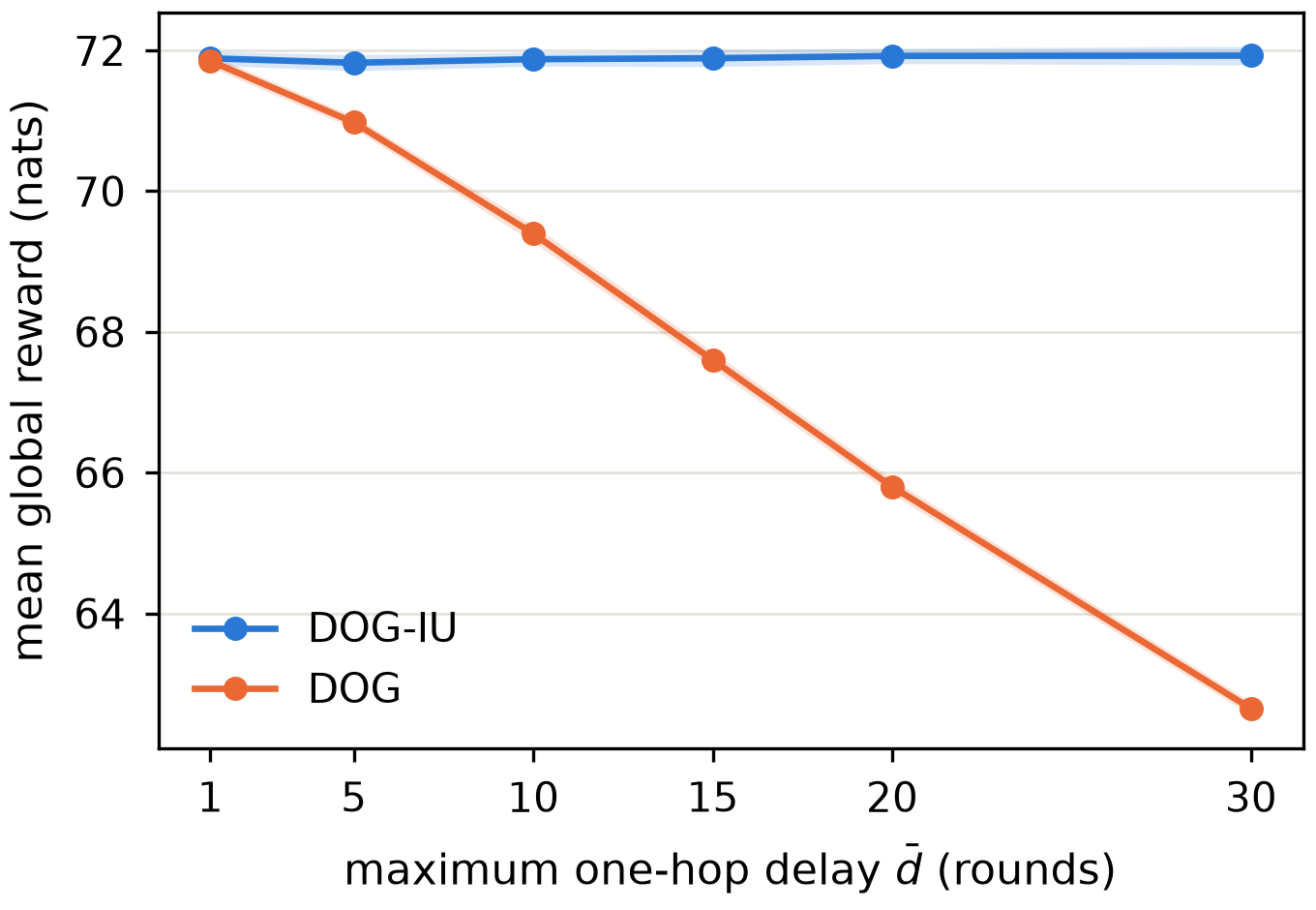}
        \vspace{-7mm}
        \caption{}
        \label{fig:info_delay}
    \end{subfigure}

    \vspace{1mm}

    \begin{subfigure}[t]{\columnwidth}
        \centering
        \includegraphics[width=\linewidth]{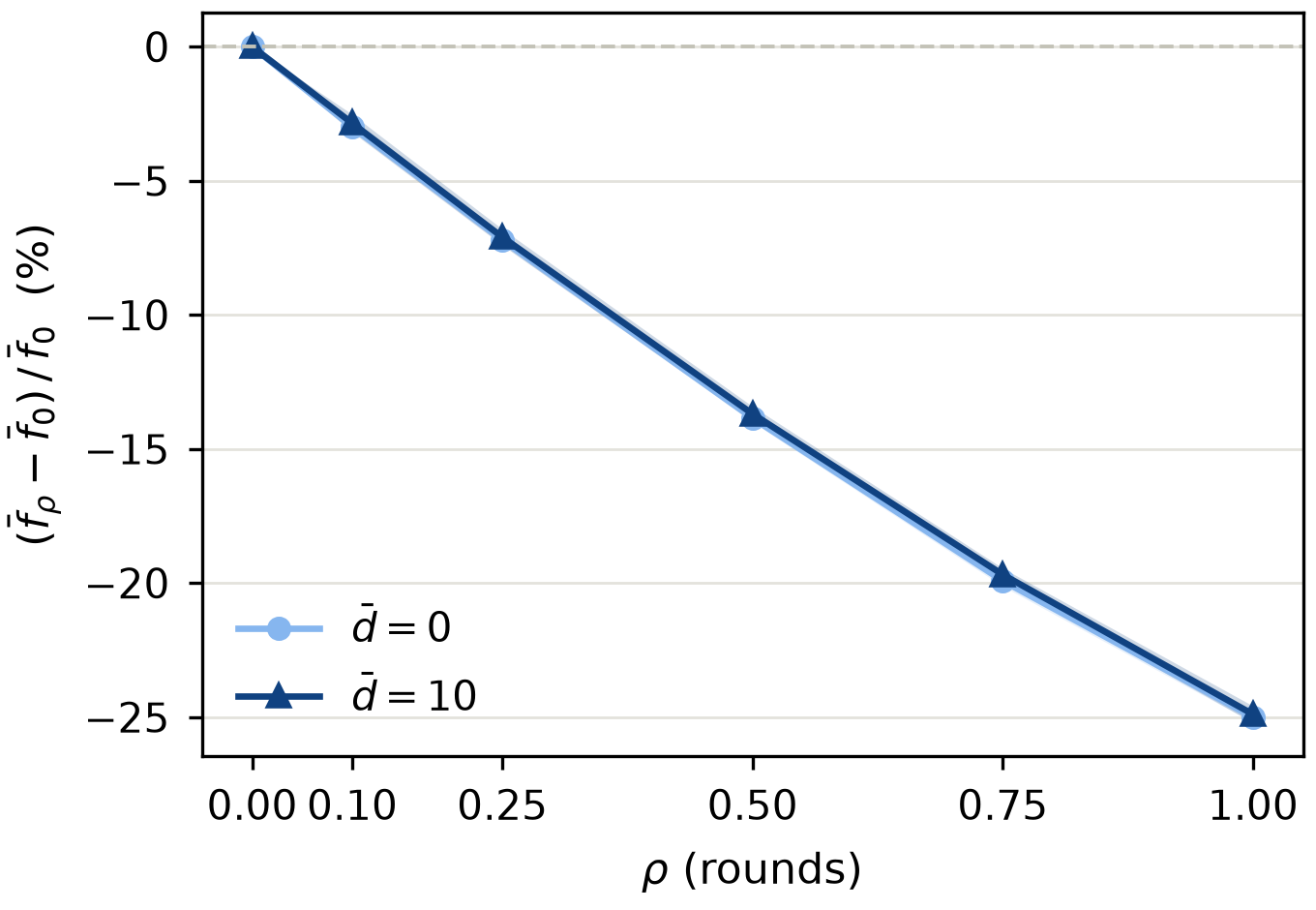}
        \vspace{-7mm}
        \caption{}
        \label{fig:info_rho}
    \end{subfigure}
    \caption{Preliminary results on the information-gathering task; shaded bands are $95\%$ confidence intervals and are narrower than the plot markers throughout. (a) Mean reward against the maximum one-hop delay $\bar{d}$ ($n=20$ seed-paired runs): \alg is essentially unaffected by delay, whereas \dog degrades monotonically, so the advantage of intermediate updates grows with $\bar{d}$ and reaches $14.81\%$ at $\bar{d}=30$. (b) Reward relative to synchronous execution against the timing mismatch bound $\rho$ ($n=10$ seed-paired runs): degradation is smooth and close to proportional in $\rho$, and the curves at $\bar{d}=0$ and $\bar{d}=10$ coincide, indicating that communication delay and clock mismatch act as separate cost channels.}
    \label{fig:info_results}
\end{figure}

The coverage reward of \Cref{sec:sim-coverage} admits no oscillation bound that vanishes at the origin, and its submodularity is asserted rather than checked. We therefore built a second simulation whose reward is continuous in the execution times and whose structural assumptions are verifiable on realized states.

\textbf{Setup.}
$|\mathcal{N}|=16$ agents are fixed at the vertices of a $4\times4$ grid over a $100\times100$ workspace with the same one-hop communication graph as before (\Cref{fig:info_layout}). Each agent orients a $45^\circ$ sensing sector, so $|\calV_i|=8$; sensing accuracy varies with bearing through a $C^1$ raised-cosine gain $g(\cdot)\in[g_{\mathrm{out}},1]$ with taper half-width $\varepsilon$, which makes the reward continuous in the execution times. $72$ targets in $8$ clusters follow linear-Gaussian dynamics with continuous-time maneuvers, with process noise injected at round boundaries only. Each agent maintains its own belief over the target states, fused from raw neighbor measurements arriving under heterogeneous integer-round delays and an omniscient evaluator computes the reported metrics. Execution times are sampled one-sided as $\tau_i(t)=t\Delta+\mathcal{U}(0,\rho)$ with $\Delta$ the round duration, and we take $\tau_{\mathrm{ref}}(t)=t\Delta$.

\textbf{Reward.}
Let $\mathcal{M}$ index the targets, let $y_m\in\mathbb{R}^4$ be target $m$'s
position--velocity state, let $H=[\,I_2\;\;0\,]$ be the position selector, and let
$\Phi(\delta)$ be the intra-round state transition over an interval $\delta$, which is
deterministic and invertible because process noise is injected only at round boundaries.
An element $e=(a,\tau)$ with $a\in\calV_i$ denotes agent $i$ orienting its sector to heading
$\theta_a$ and measuring at instant $\tau$. Its measurement is of the target position at
$\tau$, so relative to the state at $\tau_{\mathrm{ref}}$ it acts through
\begin{equation}
\label{eq:obsmap}
G(\tau)\;\triangleq\;H\,\Phi(\tau-\tau_{\mathrm{ref}}),
\end{equation}
and contributes to target $m$ the information $\lambda_{e,m}\,G(\tau)^{\!\top}G(\tau)$, where
the scalar weight
\begin{equation}
\label{eq:lambda}
\lambda_{e,m}\;\triangleq\;
\frac{g\big(\beta_{i,m}(\tau)-\theta_a\big)}
{\sigma_0^{2}\big(1+\|p_i-q_m(\tau)\|^{\alpha}/d_0^{\alpha}\big)}
\end{equation}
combines the bearing gain $g(\cdot)\in[g_{\mathrm{out}},1]$, evaluated at the offset between
the target bearing $\beta_{i,m}(\tau)$ and the selected heading, with a range-dependent noise
variance of reference range $d_0$ and exponent $\alpha$. Writing $\Omega^-_m$ for the prior
information matrix of target $m$ at $\tau_{\mathrm{ref}}$ and
\begin{equation}
\label{eq:Lambda}
\Lambda_m(\mathcal{D})\;\triangleq\;
\sum_{e\in\mathcal{D}}\lambda_{e,m}\,G(\tau_e)^{\!\top}G(\tau_e),
\end{equation}
the round reward is the mutual information between the target states and the collected
measurements,
\begin{equation}
\label{eq:mireward}
F(\mathcal{D})\;\triangleq\;\frac{1}{2R_{\max}}\sum_{m\in\mathcal{M}}
\log\det\Big(I+(\Omega^-_m)^{-1}\Lambda_m(\mathcal{D})\Big),
\end{equation}
with the prior eigenvalues capped and $R_{\max}$ an analytic constant chosen so that
\Cref{asm:B} holds exactly; the measured clip rate is ${0}$. When every element shares a
common instant, $G=H$ and $\Lambda_m$ touches only the position block, so each summand
reduces to a $2\times2$ determinant. The normalization by $R_{\max}$ serves the analysis of
\Cref{sec:async}, which requires marginals in $[0,1]$. Since it is a fixed positive scaling
it affects neither the ordering of action sets nor the curvature, and we therefore report rewards below in their unnormalized units of nats.

\textbf{Relation to entropy-based objectives.}
The active information acquisition literature typically poses the sensing problem as
minimization of the entropy of the estimation task conditioned on the future
measurements~\cite{atanasov2015decentralized}. Within a round the two objectives coincide.
Since $I(y;z_{\mathcal{D}})=H(y)-H(y\mid z_{\mathcal{D}})$, and the prior entropy $H(y)$ is
fixed by the belief at $\tau_{\mathrm{ref}}$ and hence by the history rather than by the
round's actions, the two differ over a round by an additive constant. They therefore induce
the same ordering on deployment schedules and, more to the point, identical marginals,
\begin{equation}
\label{eq:mi-entropy}
F\big(e\mid\mathcal{D}\big)\;\propto\;
H\big(y\mid z_{\mathcal{D}}\big)-H\big(y\mid z_{\mathcal{D}\cup\{e\}}\big),
\end{equation}
so each agent's bandit reward is unaffected by the choice.

We nonetheless adopt the mutual information form, because the negative conditional entropy
is not normalized: $-H(y\mid z_{\varnothing})=-H(y)\neq0$ in general, which violates
\Cref{def:submodular}, places the per-agent rewards outside the interval $[0,1]$ required by
\Cref{prob:ABD}, and leaves the curvature of \Cref{def:curvature} undefined. Mutual
information satisfies $F(\varnothing)=0$ by construction and is the normalized
representative of the same objective.

\textbf{Baselines.}
Because \dog waits for a round's neighborhood information before acting on it, it defers
both its weight update \emph{and} the fusion of that round's measurements into its belief.
\alg uses partial information immediately in both, correcting as the remaining information
arrives. The two algorithms therefore differ by the consistent application of their
respective philosophies, rather than in the decision layer alone.

\textbf{The reward is an instance of \Cref{def:reward}.}
\Cref{def:reward} asks for a set function of the deployment schedule alone. $F$ takes
$\mathcal{D}$ and returns a number, with no separate argument saying when the reward is
evaluated. However, the reward
defined in~\eqref{eq:mireward} does not have that form. It asks how much the measurements reveal about the target state \emph{at $\tau_{\mathrm{ref}}$}, and a different
choice of $\tau_{\mathrm{ref}}$ asks about a different random variable. Taken at face value, the
formulation of \Cref{sec:async} would not apply to it. However, through the following lemma we show that the arguments, $\tau_{\mathrm{ref}}$ and $\mathcal{D}$, do not interact.
 
\begin{lemma}[Well-definedness]
\label{lem:welldef}
Let $\tau,\tau'$ lie in the same round and let $y(\tau')=\Phi(\tau'-\tau)\,y(\tau)$ with
$\Phi(\cdot)$ deterministic and invertible. Then
\begin{equation}
\label{eq:anchorinv}
I\big(y(\tau');z_{\mathcal{D}}\big)\;=\;I\big(y(\tau);z_{\mathcal{D}}\big)
\qquad\text{for every }\mathcal{D}\subseteq\mathcal{E}.
\end{equation}
\end{lemma}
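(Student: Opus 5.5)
The plan is to use the invariance of mutual information under invertible deterministic (measurable) transformations of one argument. Write $y'\triangleq y(\tau')$ and $y\triangleq y(\tau)$, so that $y'=\Phi(\tau'-\tau)\,y$. Since $\tau,\tau'$ lie in the same round and process noise is injected only at round boundaries, $\Phi(\tau'-\tau)$ is a fixed, nonrandom, invertible matrix. It does not depend on the measurements $z_{\mathcal{D}}$ or on any randomness. Hence $y'$ and $y$ generate the same $\sigma$-algebra: each is a deterministic bijective function of the other.

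First I would apply the data-processing inequality in both directions. The chain $z_{\mathcal{D}} \to y \to y'=\Phi y$ is Markov because $y'$ is a deterministic function of $y$, which gives $I(y';z_{\mathcal{D}})\le I(y;z_{\mathcal{D}})$. Invertibility gives $y=\Phi^{-1}y'$, so the reverse chain $z_{\mathcal{D}}\to y'\to y$ yields $I(y;z_{\mathcal{D}})\le I(y';z_{\mathcal{D}})$. Together these give \cref{eq:anchorinv} for every $\mathcal{D}\subseteq\mathcal{E}$, including $\mathcal{D}=\emptyset$, where both sides are zero.

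As a consistency check in the Gaussian setting of \cref{eq:mireward}, I would also verify the identity directly. With information coordinates at $\tau$, the prior transforms as $\Omega'^-=\Phi^{-\top}\Omega^-\Phi^{-1}$. The observation map transforms as $G'(\tau_e)=G(\tau_e)\Phi^{-1}$, since $H\Phi(\tau_e-\tau')=H\Phi(\tau_e-\tau)\Phi(\tau-\tau')$. Therefore $\Lambda'_m=\Phi^{-\top}\Lambda_m\Phi^{-1}$, and
\[
\det\!\big(I+(\Omega'^-)^{-1}\Lambda'_m\big)=\det\!\big(I+\Phi^{-1}(\Omega^-)^{-1}\Lambda_m\Phi\big)=\det\!\big(I+(\Omega^-)^{-1}\Lambda_m\big),
\]
by similarity invariance of the determinant. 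Summing over targets shows that $F$ does not depend on the anchor.

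The main obstacle is justifying that $\Phi$ is genuinely deterministic between $\tau$ and $\tau'$, since any intra-round noise would break the Markov reverse chain. This is exactly where I would invoke the modelling assumption that noise enters only at round boundaries and that both instants lie in the same round. In the degenerate-prior case where $\Omega^-$ is singular, the eigenvalue capping keeps every determinant finite, so the similarity argument goes through unchanged.
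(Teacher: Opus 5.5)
Your proof is correct, but it takes a different route from the paper. The paper works with differential entropies. Because $\Phi=\Phi(\tau'-\tau)$ is a fixed linear map, the change-of-variables formula shifts both $h(y(\tau))$ and $h(y(\tau)\mid z_{\mathcal{D}})$ by the same constant $\log|\det\Phi|$, and this constant cancels in the difference defining mutual information.

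You instead apply the data-processing inequality in both directions, i.e., invariance of mutual information under a bimeasurable bijection of one argument. Your argument is more general and somewhat more robust. It needs neither linearity of $\Phi$ nor densities for $y$. It also avoids the tacit requirement in the paper's subtraction that the (conditional) differential entropies be finite, which would fail, e.g., for a degenerate prior. The paper's computation buys explicitness: it shows exactly how the entropies move, which matches its preceding discussion relating the mutual-information reward to entropy-based objectives.

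Your Gaussian check adds something the paper leaves implicit: it shows that the closed-form reward \eqref{eq:mireward}, which is what actually defines $F$, is anchor-invariant at the level of the formula. There are two small points in that check. First, the conjugation is reversed: the transformed product is $\Phi(\Omega^-)^{-1}\Lambda_m\Phi^{-1}$, not $\Phi^{-1}(\Omega^-)^{-1}\Lambda_m\Phi$. This is harmless, since the determinant is similarity-invariant either way. Second, the identity $\Phi(\tau_e-\tau')=\Phi(\tau_e-\tau)\Phi(\tau-\tau')$ uses the group property of the intra-round transition. That property holds for the linear time-invariant model but is not among the lemma's hypotheses.
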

 
\begin{proof}
Abbreviate $\Phi=\Phi(\tau'-\tau)$ and $z=z_{\mathcal{D}}$. Since $\Phi$ is deterministic
and invertible, the change of variables $y(\tau')=\Phi\,y(\tau)$ has constant Jacobian
determinant $|\det\Phi|$, so the differential entropies satisfy
\begin{equation}
h\big(y(\tau')\big)=h\big(y(\tau)\big)+\log|\det\Phi|,
\end{equation}
and the same shift applies conditionally on $z$, since $\Phi$ does not depend on $z$:
\begin{equation}
h\big(y(\tau')\mid z\big)=h\big(y(\tau)\mid z\big)+\log|\det\Phi|.
\end{equation}
Subtracting, the two $\log|\det\Phi|$ terms cancel and
\begin{equation}
I\big(y(\tau');z\big)=h\big(y(\tau')\big)-h\big(y(\tau')\mid z\big)
=I\big(y(\tau);z\big).
\end{equation}
\end{proof}
 
Process noise is injected only at round boundaries, so within a round the state propagates
deterministically through the invertible transition $\Phi$ and \Cref{lem:welldef} applies. Thus
the reference instant may be moved anywhere in the round without changing the value
of the reward as defined in \cref{eq:mireward}. The reward is therefore a function of $\mathcal{D}$ alone and is an
instance of \Cref{def:reward}, as the analysis of \Cref{sec:async} requires.
 
This has an important consequence. The
realized reward $F(\mathcal{D}_t)$ does not depend on the reference instant, but the reference
synchronous deployment $\bar{\mathcal{D}}_t=\{(a_{i,t},\tau_{\mathrm{ref}})\}_{i\in\mathcal{N}}$
is \emph{defined} by it. Choosing $\tau_{\mathrm{ref}}$ is choosing which synchronized
configuration to compare against. Magnitudes such as
$|F(\mathcal{D}_t)-F(\bar{\mathcal{D}}_t)|$, and every bound of \Cref{sec:async}, are
therefore insensitive to the choice, whereas the sign of that difference is not and is meaningful only relative to a named baseline.
 
\textbf{Submodularity.}
\Cref{asm:A} is likewise provable for this reward. Recall that
$F|_\tau$ is the restriction of $F$ to sets of action-instants that all carry the same timestamp $\tau$, so a slice is indexed by an ordinary set $A\subseteq\mathcal{V}$ of actions,
one per participating agent.
 
\begin{lemma}[Submodularity]
\label{lem:slice}
Fix an instant $\tau$, and let each action $a\in\mathcal{V}$, executed at that instant,
contribute to target $m$ the information $\lambda_{a,m}P_m$, where $\lambda_{a,m}\ge0$ is a
scalar depending on the action and the target, and $P_m\succeq0$ depends on neither and is
therefore shared by all actions. Then $F|_\tau$ is normalized, non-decreasing,
submodular, and $2$nd-order submodular on $2^{\mathcal{V}}$.
\end{lemma}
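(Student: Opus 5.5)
The plan is to reduce $F|_\tau$ to a nonnegative sum of concave scalar functions of modular set functions, and then read each required property off the sign of a derivative.

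\textbf{Step 1: rewrite the reward.} Under the hypothesis, for $A\subseteq\mathcal{V}$ executed at the common instant $\tau$,
\begin{equation*}
F|_\tau(A)=\frac{1}{2R_{\max}}\sum_{m\in\mathcal{M}}\log\det\Big(I+s_m(A)\,(\Omega^-_m)^{-1}P_m\Big),
\qquad s_m(A)\triangleq\sum_{a\in A}\lambda_{a,m}.
\end{equation*}
Each $s_m$ is a nonnegative modular set function. The shared factor $P_m$ is what makes this possible: the information contributed by a set collapses to a scalar multiple of one fixed matrix.

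\textbf{Step 2: diagonalize.} The matrix $B_m\triangleq(\Omega^-_m)^{-1}P_m$ is similar to $(\Omega^-_m)^{-1/2}P_m(\Omega^-_m)^{-1/2}\succeq 0$. Hence its eigenvalues $\mu_{m,1},\dots,\mu_{m,n}$ are real and nonnegative, and
\begin{equation*}
\log\det\big(I+sB_m\big)=\sum_k\phi_{m,k}(s),\qquad \phi_{m,k}(s)\triangleq\log(1+\mu_{m,k}s).
\end{equation*}
So $F|_\tau=\frac{1}{2R_{\max}}\sum_{m,k}\phi_{m,k}\circ s_m$. Every property in the statement is preserved under nonnegative sums, so it suffices to treat a single term $\phi\circ s$.

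\textbf{Step 3: normalization, monotonicity, submodularity.} The scalar $\phi(s)=\log(1+\mu s)$ satisfies $\phi(0)=0$, $\phi'=\mu/(1+\mu s)\ge0$ and $\phi''=-\mu^2/(1+\mu s)^2\le0$ on $s\ge0$.
\begin{itemize}
\item Normalization follows from $s(\emptyset)=0$.
\item Monotonicity follows from $\phi'\ge0$ together with $\lambda\ge0$.
\item For submodularity, the marginal of an element $e$ with weight $\ell=\lambda_{e,m}$ is $\phi(x+\ell)-\phi(x)$, where $x=s(\cdot)$ is the weight of the conditioning set. By concavity this is nonincreasing in $x$, and $x$ grows with the set.
\end{itemize}

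\textbf{Step 4: second-order submodularity (the main obstacle).} The plan is to use the sign of the third derivative, $\phi'''=2\mu^3/(1+\mu s)^3\ge0$. Let $c=s(\mathcal{C})$, $\alpha=s(\mathcal{A})$, $\beta=s(\mathcal{B})$ and $\ell=\lambda_{e,m}$; disjointness makes the weights additive. Define
\begin{equation*}
h(c)\triangleq\big[\phi(c+\ell)-\phi(c)\big]-\big[\phi(c+\alpha+\ell)-\phi(c+\alpha)\big].
\end{equation*}
Then \eqref{eq:second_order_submod} reads $h(c)\ge h(c+\beta)$, so it suffices to show that $h$ is nonincreasing. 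We compute
\begin{equation*}
h'(c)=\big[\phi'(c+\ell)-\phi'(c)\big]-\big[\phi'(c+\alpha+\ell)-\phi'(c+\alpha)\big].
\end{equation*}
Each bracket equals $\int_0^\ell\phi''(\cdot+u)\,du$ evaluated at $c$ and at $c+\alpha$ respectively. Since $\phi''$ is nondecreasing (because $\phi'''\ge0$) and $\alpha\ge0$, the second bracket dominates the first, so $h'\le0$.

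Two further checks are needed. First, the element $s$ in the definition may itself belong to $\mathcal{A}$, $\mathcal{B}$ or $\mathcal{C}$. In that case the corresponding marginals vanish or reduce to the same scalar inequality with $\ell$ absorbed, and these degenerate cases should be verified separately. Second, the case $\mu=0$ is trivial, since the term is then identically zero. Summing the term-wise inequalities over $m$ and $k$ with the positive factor $1/(2R_{\max})$ completes the proof.
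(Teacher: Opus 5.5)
Your proposal is correct and follows essentially the same route as the paper. You collapse the information to $s_m(A)P_m$, diagonalize $(\Omega^-_m)^{-1}P_m$ into sums of $\log(1+x\mu_{m,k})$, and read the four properties off the signs of the first three derivatives; your monotonicity of $c\mapsto[\phi(c+\ell)-\phi(c)]-[\phi(c+\alpha+\ell)-\phi(c+\alpha)]$ is exactly the paper's convexity of $h(x)=g_m(x+\lambda_s)-g_m(x)$. The degenerate cases you flag ($s\in\mathcal{A}\cup\mathcal{B}\cup\mathcal{C}$) close immediately by ordinary submodularity, or trivially when $s\in\mathcal{C}$, a point the paper also leaves implicit.
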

 
\begin{proof}
Fix a target $m$ and collect the contributions of a set $A\subseteq\mathcal{V}$. Because
every action contributes a multiple of the same matrix $P_m$, the total information is
\begin{equation}
\label{eq:scalarcollapse}
\sum_{a\in A}\lambda_{a,m}P_m=\lambda_{A,m}P_m,
\qquad
\lambda_{A,m}\triangleq\sum_{a\in A}\lambda_{a,m},
\end{equation}
so the entire dependence on $A$ is through the single non-negative number
$\lambda_{A,m}$, which is modular in $A$. Let $\mu_{m,1},\dots,\mu_{m,n}\ge0$ be the
eigenvalues of $(\Omega^-_m)^{-1}P_m$. Then
\begin{equation}
\label{eq:scalarform}
F|_\tau(A)=\sum_{m\in\mathcal{M}}g_m\big(\lambda_{A,m}\big),
\end{equation}
where
\begin{equation}
\label{eq:gdef}
g_m(x)\;\triangleq\;\frac{1}{2R_{\max}}\sum_{k}\log\big(1+x\,\mu_{m,k}\big),
\end{equation}
which reduces the claim to properties of the scalar functions $g_m$ on $x\ge0$:
\begin{equation}
\label{eq:gderivs}
g_m(0)=0,\quad g_m'>0,\quad g_m''<0,\quad g_m'''>0,
\end{equation}
each obtained by differentiating \cref{eq:gdef} termwise, the $k$th term
contributing $\mu_{m,k}/(1+x\mu_{m,k})$, $-\mu_{m,k}^2/(1+x\mu_{m,k})^2$ and
$2\mu_{m,k}^3/(1+x\mu_{m,k})^3$ respectively, up to the factor $1/(2R_{\max})$.
 
The first three properties in \cref{eq:gderivs} give the first three claims: $g_m(0)=0$
and $\lambda_{\varnothing,m}=0$ give normalization; $g_m'>0$ with $\lambda_{A,m}$
non-decreasing in $A$ gives monotonicity; and a concave non-decreasing function of a modular
set function is submodular.
 
For the fourth, fix an action $s$ and write $\lambda_s\triangleq\lambda_{s,m}$ and
\begin{equation}
h(x)\triangleq g_m(x+\lambda_s)-g_m(x),
\end{equation}
so that the marginal gain of $s$ against a set $C$ is $h(\lambda_{C,m})$. Writing
the second order submodular condition in terms of $h$, with $u=\lambda_{C,m}$,
$\alpha=\lambda_{A,m}$ and $\beta=\lambda_{B,m}$ for disjoint $A,B,C$, it becomes
\begin{equation}
h(u)-h(u+\alpha)\;\ge\;h(u+\beta)-h(u+\alpha+\beta),
\end{equation}
that is, the decrease of $h$ over an interval of length $\alpha$ is largest at the left, which
holds precisely when $h$ is convex. Finally
\begin{equation}
h''(x)=g_m''(x+\lambda_s)-g_m''(x)\;\ge\;0,
\end{equation}
since $g_m'''>0$ makes $g_m''$ non-decreasing and $\lambda_s\ge0$. Each summand
of \cref{eq:scalarform} therefore satisfies all four properties, and each is preserved
under summation over $m$.
\end{proof}

By \cref{eq:Lambda} the hypothesis of \Cref{lem:slice} holds with
$P_m=G(\tau)^{\!\top}G(\tau)$, which is common to all elements of a slice precisely because
they share the instant $\tau$; the bearing gain and range law enter only through the scalar
$\lambda_{e,m}$. This is what the sensing model is designed to guarantee, and it is the
reason accuracy is allowed to vary with bearing but not to become direction-dependent within
a single measurement. Across mixed timestamps the hypothesis fails, since
$G(\tau)\ne G(\tau')$, which is why \Cref{asm:A} is stated for perfectly synchronous execution times alone and why no result
of \Cref{sec:async} requires submodularity beyond them.

\textbf{Preliminary results.}
Empirical checks agree with the theory. Sampling from realized states, \Cref{asm:A} was violated in $0$ of ${39{,}000}$ trials and every marginal fell within $[0,1]$; the oscillation bound $\mathrm{osc}_1(\rho)$ was estimated directly at each $\rho$ and exhibits no jump at the origin, placing this reward in the first regime of \Cref{rem:dichotomy}; and the gaps of \Cref{lem:global,lem:marginal} were bounded by their predicted ceilings at every measured configuration, with roughly an order of magnitude of margin.

\Cref{fig:info_delay} reports mean reward against the maximum one-hop delay $\bar{d}$, over $n={20}$ seed-paired runs. The two algorithms agree at $\bar{d}=1$, after which \alg's reward remains essentially constant in $\bar{d}$ while \dog's declines monotonically. \alg improves on \dog by up to ${14.81}\%$ at $\bar{d}=30$. Acting on partial neighborhood information therefore absorbs almost the whole cost of delay on this task, and the advantage grows with delay as in \Cref{sec:sim-coverage}.

\Cref{fig:info_rho} isolates asynchrony by holding the delay fixed and sweeping the timing mismatch bound. Realized reward degrades smoothly and close to proportionally in $\rho$ relative to synchronous execution, over $n={10}$ seed-paired runs. The curves at $\bar{d}=0$ and $\bar{d}=10$ coincide to within fractions of a percent, which is what the analysis predicts. Communication delay enters the regret bound through $\bar{M}_T$ and clock mismatch through $\gamma$, additively and with no cross term. This is the graceful degradation asserted by \Cref{thm:async}, and it is the effect the coverage task of \Cref{sec:sim-coverage} cannot exhibit.

\section{Conclusion} \label{sec:con}


This paper introduces a distributed online optimization framework for submodular coordination under heterogeneous communication delays and asynchronous local clocks. The key capability provided by \alg is that agents can learn from partial neighborhood information as it arrives, instead of waiting for complete delayed feedback. This reduces the effective delay between acting and learning, enabling more timely coordination in dynamic environments while preserving provable network-level approximation guarantees.
The simulations validate our approach. \alg performs similarly to \dog under small delays, but increasingly outperforms \dog on larger delays by adapting earlier to changing conditions. Thus, the advantage of \alg is improved decision quality under delayed communication, rather than a fundamentally different asymptotic convergence rate.


 
Our future work will focus on leveraging adaptive bandit algorithms, 
such as Optimistic Hedge~\cite{rakhlin2013online}, 
to improve responsiveness to rapidly changing environments.



\bibliographystyle{IEEEtran}
\bibliography{references}

\appendices

\section{Proof of~\Cref{thm:per_agent}}\label{app:per_agnet_updated}


We prove the main result by establishing how far off $\hat{p}_t$ of \alg is from the reference distribution $p^{\text{Exp3}}_{t}$ corresponding to the standard \scenario{EXP3} bandit algorithm without delays (for the nonstochastic case)\cite{lattimore2020bandit}. To that end, we choose to work with losses instead of rewards and define the loss and the importance weighted loss estimate as:
\begin{align}
    l_{a,t} = 1-r_{a,t}, \quad 
    \tilde{l}_{a,t} = \frac{\mathbf{1}\{a=a_t\}}{\hat{p}_{a,t}}l_{a_t,t}.
\end{align}

We also define the cumulative loss up to round $t$ as
\begin{equation}
    \widetilde{L}_{a,t} = \sum_{s = 1}^t \tilde{l}_{a,t}.
\end{equation}

Now in our setting, the algorithm uses approximate per-round loss estimates $\hat{l}_{a,s}^{(t)}$ for each action $i$ and past round $s \leq t$ where

\begin{equation}
    \hat{l}^{(t)}_{a,t-\bar{d}} = \tilde{l}_{a,t-\bar{d}},
\end{equation}that is, our algorithm maintains accurate losses to rounds up to $t-\bar{d}$ where $\bar{d}$ is the maximum delay for agent $i$ receiving its neighbors' information. 

For action $a$, round $s$, and current round $t \geq s$, we define the per-round estimation error as

\begin{equation}
    e^{(t)}_{a,s} = \hat{l}^{(t)}_{a,s} - \tilde{l}^{(t)}_{a,s}
\end{equation}

where $e^{(t)}_{a,s} = 0$ for $s \leq t-\bar{d}$. We also define the loss formulation equivalent of the cumulative error (\cref{eq:cumulative_err_def}) as

\begin{equation}
    \varepsilon_{a,t} = \hat{L}_{a,t} - \widetilde{L}_{a,t} = \sum_{s=t-\bar{d}+1}^t e^{(t)}_{a,s}. \label{eq:loss_difference}
\end{equation}

Now, we recall the definition of probability distributions for both \alg and the reference as
\begin{align}
\hat{p}_{a,t}
&= \frac{\exp(\theta_a)}{\sum_k \exp(\theta_k)},
&
p_{a,t}^{\text{Exp3}}
&= \frac{\exp(\theta_a')}{\sum_k \exp(\theta_k')},
\label{eq:prob_softmax_def}
\end{align}where $\theta_a = -\eta \hat{L}_{a,t-1}$ and $\theta_a' = -\eta \widetilde{L}_{a,t-1}$.

We also know that the softmax function has a $1/2$-lipschitz bound irrespective of the $l_p$ norm \cite{nair2026softmax}, then we have
\begin{equation}
    || \sigma(x) - \sigma(y) ||_1 \leq \frac{1}{2}||x-y||_1.
\end{equation}
Combining this inequality with \cref{eq:loss_difference,eq:prob_softmax_def} results in
\begin{align}
    ||\hat{p}_t(\theta) - p_{t}^{\text{Exp3}}(\theta')||_1 &\leq \frac{1}{2} ||\eta\hat{L}_{t-1} - \eta\widetilde{L}_{t-1}||_1
    \\
    &\leq \frac{\eta}{2} \sum_a |\varepsilon_{a,t-1}|. \label{eq:pt_diff_bound_new}
\end{align}To bound the term above we define
\begin{equation}
    M_t \triangleq \max_{a \in \{1,\ldots,|\mathcal{V}|\}} \left|\varepsilon_{a,t}\right| = \max_a \left|\hat{L}_{a,t} - \widetilde{L}_{a,t}\right|, \label{eq:Mt_def}
\end{equation}
which is equivalent to the reward based definition of the maximum cumulative loss in \cref{eq:Mt_definition}. Now we can further bound the expectation of $|\varepsilon_{a,t}|$ in the worst case, by assuming all estimates of losses are as far from the truth as possible, 
\begin{align}
    \mathbb{E}[|\varepsilon_{a,t}|] &= \mathbb{E}\left[\left|\sum_{s=t-\bar{d}+1}^t e^{(t)}_{a,s}\right|\right]
    \\
    &\leq \sum_{s=t-\bar{d}+1}^t \mathbb{E}[|e^{(t)}_{a,s}|]
    \\
    &\leq \sum_{s=t-\bar{d}+1}^t \mathbb{E}\left[\frac{|\hat{l}^{\text{raw},(t)}_{a,s} - l_{a,s}|}{\hat{p}_{a,s}} \mathbf{1}\{a = a_s\}\right]
    \\
    &\leq \sum_{s=t-\bar{d}+1}^t \mathbb{E}\left[\frac{\mathbf{1}\{a = a_s\}}{\hat{p}_{a,s}}\right] \label{eq:worst_bound_loss_diff}
    \\
    &= \sum_{s=t-\bar{d}+1}^t \mathbb{E}\left[\mathbb{E}\left[\frac{\mathbf{1}\{a = a_s\}}{\hat{p}_{a,s}} \middle|\, \mathcal F_{s-1} \right]\right] = \bar{d}, \label{eq:expectation_of_identity}
\end{align}where \cref{eq:worst_bound_loss_diff} comes from the worst case bound due to $l_{a,s},\hat{l}^{\text{raw}}_{a,s} \in [0,1]$ and \cref{eq:expectation_of_identity} results from applying the law of total expectation and $\mathbb{E}[\mathbf{1}\{a = a_s\}] = \hat{p}_{a,s}$ with $\mathcal F_{s-1}$ being the $\sigma$-algebra of all information available to the agent up to round $s-1$. We can now bound $\mathbb{E}[M_t]$ in the worst case as
\begin{equation}
\label{eq:Mt_bound_worst_case}
    \mathbb{E}[M_t] = \mathbb{E}\left[\max_{a \in \mathcal{V}} \,|\varepsilon_{a,t}|\right] \leq \sum_{a = 1}^{|\mathcal{V}|} \mathbb{E}[|\varepsilon_{a,t}|] \leq |\mathcal{V}|\bar{d}.
\end{equation}Now we express the per-agent regret as
\begin{equation}
\mathrm{Reg}_T
=
\sum_{t=1}^T l_{a_{t},t} - \min_{a\in\mathcal V}\sum_{t=1}^T l_{a,t}.
\end{equation}
Taking expectation and using
$
l_{a_{t},t}
=
\sum_{a\in\mathcal V}\mathbf 1\{a_{t}=a\}\,l_{a,t},
$ 
together with the law of total expectation yields
\begin{equation}
\begin{aligned}
\mathbb E[\mathrm{Reg}_T]
&=
\sum_{t=1}^T \mathbb E[l_{a_{t},t}] 
- \min_{a\in\mathcal V}\sum_{t=1}^T l_{a,t} \\
&\hspace{-3em}=
\sum_{t=1}^T
\mathbb E\!\left[
\mathbb E\!\left[
\sum_{a\in\mathcal V}\mathbf 1\{a_{t}=a\}l_{a,t}
\,\middle|\, \mathcal F_{t-1}
\right]\right] - \min_{a\in\mathcal V}\sum_{t=1}^T l_{a,t}\\
&\hspace{-3em}=
\sum_{t=1}^T
\mathbb E\!\left[\sum_{a\in\mathcal V} p_{a,t} l_{a,t}\right] -\min_{a\in\mathcal V}\sum_{t=1}^T l_{a,t}, \label{eq:expectation_regret_new}
\end{aligned}
\end{equation}
where $p_{a,t}=\mathbb P(a_{t}=a\mid \mathcal F_{t-1})$ and $\mathcal F_{t-1}$ is the $\sigma$-algebra of all information available to the agent up to round $t-1$. 

Taking the difference between the expected regret of \alg's and \scenario{Exp3} and applying \cref{eq:expectation_regret_new} results in
\begin{align}
&\mathbb{E}[\text{Reg}_T - \text{Reg}_T^{\text{Exp3}}] = \mathbb{E}[\text{Reg}_T]
   - \mathbb{E}[\text{Reg}_T^{\text{Exp3}}]
   \nonumber \\
& =
\begin{aligned}
&\sum_{t=1}^{T} \mathbb{E}\Big[
    \sum_{a \in \mathcal{V}} \hat{p}_{a,t} \, l_{a,t}
  \Big]
- \sum_{t=1}^{T} \mathbb{E}\Big[
    \sum_{a \in \mathcal{V}} p_{a,t}^{\text{Exp3}} \, l_{a,t}
  \Big]
\end{aligned}
\label{eq:reg_diff_2_n} \\
& = \sum_{t=1}^{T} \mathbb{E}\Big[
      \big\langle \hat{p}_t - p^{\text{Exp3}}_t,\, l_t \big\rangle
   \Big]
   \label{eq:reg_diff_3_n} \\
& \le \sum_{t=1}^{T} \mathbb{E}\Big[
      \|\hat{p}_t - p^{\text{Exp3}}_t\|_1
   \Big]
   \label{eq:reg_diff_4_n} \\
& \le \frac{1}{2} \sum_{t=1}^{T} \frac{\eta |\mathcal{V}|}{2}  \mathbb{E}\left[
      M_{t-1}
   \right]
   \label{eq:reg_diff_5_n}
\end{align}where we used $r_{a,t} \in [0,1]$ for all $a,t$ to obtain equation \cref{eq:reg_diff_4_n} and used equation \cref{eq:pt_diff_bound_new} and \cref{eq:Mt_def} to obtain equation \cref{eq:reg_diff_5_n}. Substituting the regret bound of the standard Exp3 without delays from \cite{lattimore2020bandit}, we the following regret bound for \alg
\begin{equation}
    \mathbb{E}[\text{Reg}_T] \leq \frac{\ln(|\mathcal{V}|)}{\eta} + \eta |\mathcal{V}| T + \frac{\eta |\mathcal{V}| T}{4} \bar{M}_T,
\end{equation}where $\bar{M}_T$ is defined in \cref{eq:Mt_bar_definition}. With a learning rate of $\eta = \sqrt{\frac{\ln |\mathcal{V}|}{|\mathcal{V}| T (1+\bar{M}_T/4)}}$, we have an average expected regret of
\begin{equation}
    \frac{\mathbb{E}[\mathrm{Reg}_T]}{T} \leq O\!\left(\left(\sqrt{|\mathcal V|(1+\bar{M}_T/4)} \right) \sqrt{\frac{\ln|\mathcal V|}{T}} \right).
\end{equation}Substituting the worst case bound of $\mathbb{E}[M_t]$ from \cref{eq:Mt_bound_worst_case} and using a learning rate of $\eta = \sqrt{\frac{\ln |\mathcal{V}|}{|\mathcal{V}| T (1+\bar{d}/4)}}$ results in an average expected regret of 
\begin{equation}
    \frac{\mathbb{E}[\mathrm{Reg}_T]}{T} \leq O\!\left(\left(\sqrt{|\mathcal V|(1+|\mathcal V|\bar{d}/4)} \right) \sqrt{\frac{\ln|\mathcal V|}{T}} \right),
\end{equation}
completing the proof the theorem.

\section{Proof of~\Cref{thm:approx_performance}}\label{app:main}

We prove the result in \Cref{thm:approx_performance} as follows: %
{\begin{align}
    &\sum_{t=1}^{T} f_t(\calA^\opt)\nonumber\\
    &=\sum_{t=1}^{T} f_t(\calA^\opt\cup\calA_t) - \sum_{t=1}^{T} \sum_{i\in\calN} f_t(a_{i,t}\,|\,\calA^\opt\cup\{a_{j,t}\}_{j\in[i-1]}) \label{aux22:1}\\
    &\leq\sum_{t=1}^{T} f_t(\calA_t) + \sum_{t=1}^{T} \sum_{i\in\calN} f_t(a_{i}^\opt\,|\,\calA_t) \nonumber\\
    &\quad - (1-\kappa_{f}) \sum_{t=1}^{T} \sum_{i\in\calN} f_t(a_{i,t}\,|\,\{a_{j,t}\}_{j\in\calN_{i}}) \label{aux22:2}\\
    &\leq\sum_{t=1}^{T} f_t(\calA_t) + \kappa_{f} \sum_{t=1}^{T} \sum_{i\in\calN} f_t(a_{i,t}\,|\,\{a_{j,t}\}_{j\in\calN_{i}}) \nonumber\\
    &\quad + \sum_{i\in\calN} \sum_{t=1}^{T} \mleft[f_t(a_{i}^\opt\,|\,\{a_{j,t}\}_{j\in\calN_{i}}) - f_t(a_{i,t}\,|\,\{a_{j,t}\}_{j\in\calN_{i}})\mright] \label{aux22:3}\\  
    &\leq\sum_{t=1}^{T} f_t(\calA_t) + \sum_{i\in\calN} \Reg(\{a_{i,t}\}_{t\in [T]}) \nonumber\\
    &\quad + \kappa_{f} \sum_{t=1}^{T} \sum_{i\in\calN} f_t(a_{i,t}\,|\,\{a_{j,t}\}_{j\in\calN_{i}}) \label{aux22:4}\\
    &=(1+\kappa_f) \sum_{t=1}^{T} f_t(\calA_t) + \sum_{i\in\calN} \Reg(\{a_{i,t}\}_{t\in [T]}) \nonumber\\
    &\quad+ \kappa_{f} \sum_{t=1}^{T} \sum_{i\in\calN} \mleft[f_t(a_{i,t}\,|\,\{a_{j,t}\}_{j\in\calN_{i}}) - f_t(a_{i,t}\,|\,\{a_{j,t}\}_{j\in [i-1]})\mright] \label{aux22:5}\\
    &\leq(1+\kappa_f) \sum_{t=1}^{T} f_t(\calA_t) + \sum_{i\in\calN} \Reg(\{a_{i,t}\}_{t\in [T]}) \nonumber\\
    &\quad+ \kappa_{f} \sum_{t=1}^{T} \sum_{i\in\calN} \mleft[f_t(a_{i,t}) - f_t(a_{i,t}\,|\,\{a_{j,t}\}_{j\in [i-1]\setminus\calN_{i}})\mright] \label{aux22:6}\\
    &\leq (1+\kappa_f) \sum_{t=1}^{T} f_t(\calA_t) + \sum_{i\in\calN} \Reg(\{a_{i,t}\}_{t\in [T]}) \nonumber\\
    &\quad+ \kappa_{f} \sum_{t=1}^{T} \sum_{i\in\calN} \underbrace{\mleft[f_t(a_{i,t}) - f_t(a_{i,t}\,|\,\{a_{j,t}\}_{j\in \calN_{i}^c})\mright]}_{\ourcurv_{f_t,i}(\calN_{i})}, \label{aux22:7}
\end{align}}where \cref{aux22:1} holds by telescoping the sum, \cref{aux22:2} holds since $f$ is submodular and since $1-\kappa_{f} \leq \frac{f_t(a_{i,t}\,|\,\{a_{j,t}\}_{j\in\calN\setminus\{i\}})}{f_t(a_{i,t})} \leq \frac{f_t(a_{i,t}\,|\,\calA^\opt\cup\{a_{j,t}\}_{j\in[i-1]})}{f_t(a_{i,t}\,|\,\{a_{j,t}\}_{j\in\calN_{i}})}$ per \Cref{def:curvature}, \cref{aux22:3} holds from submodularity, \cref{aux22:4} holds from \Cref{eq:static_regret_agent_i}, \cref{aux22:6} holds since $f_t$ is 2nd-order submodular, and \cref{aux22:7} holds from \Cref{def:coin}. %

Taking the expectation of both sides of \cref{aux22:7} and leveraging \Cref{thm:per_agent}, we prove \cref{eq:thm1_intermediate} by the following,
\begin{align}
    &\mathbb{E}\mleft[\sum_{t=1}^T f_t(\solopt)\mright] \nonumber \\
    &\leq (1+\kappa_f)\, \mathbb{E}\mleft[\sum_{t=1}^T f_t(\calA_t)\mright] + \kappa_f \sum_{i\in\calN} \mathbb{E}\mleft[\sum_{t=1}^T \ourcurv_{f_t,i}(\calN_{i})\mright] \nonumber \\\label{aux22:10}
    &\quad + \tilde{O}\mleft(|\calN|\sqrt{|\bar{\mathcal V}|(1+\mathbb{E}[M_t]/4)T} \mright).
\end{align}

In the fully centralized scenario, we have $\calN_{i}=\calN\setminus\{i\}$. Thus, $\ourcurv_{f_t,i}(\calN_{i})= 0$, and thus \cref{eq:thm1_centralized} is proved.

Finally, in the fully decentralized case where $\calN_{i}=\emptyset$, taking expectations on both sides of \cref{aux22:4}, we have

\begin{align}
    &\mathbb{E}\mleft[\sum_{t=1}^T f_t(\calA_t)\mright] \nonumber\\
    &\geq \mathbb{E}\mleft[\sum_{t=1}^T f_t(\solopt)\mright] - \kappa_f \sum_{i\in\calN} \mathbb{E}\mleft[\sum_{t=1}^T f_t(a_{i,t})\mright] \nonumber\\
    &\quad - \tilde{O}\mleft(|\calN|\sqrt{|\bar{\mathcal V}|(1+\mathbb{E}[M_t]/4)T} \mright)
    \nonumber\\
    &\geq \mathbb{E}\mleft[\sum_{t=1}^T f_t(\solopt)\mright] - \frac{\kappa_f}{1-\kappa_f} \sum_{i\in\calN} \mathbb{E}\mleft[\sum_{t=1}^T f_t(\calA_t)\mright] \nonumber\\
    &\quad - \tilde{O}\mleft(|\calN|\sqrt{|\bar{\mathcal V}|(1+\mathbb{E}[M_t]/4)T} \mright).
\end{align}and thus \cref{eq:thm1_decentralized} is proved. \qed

\section*{Appendix III: Proofs of Section~\ref{sec:async}}

Throughout, fix a round $t$, write $M\triangleq|\mathcal{N}|$, and abbreviate
$\tau_{\mathrm{ref}}\triangleq\tau_{\mathrm{ref}}(t)$. For each agent $j$ let
\begin{equation}
\label{eq:appshort}
e_j\triangleq\big(a_{j,t},\tau_j(t)\big),
\qquad
\bar e_j\triangleq\big(a_{j,t},\tau_{\mathrm{ref}}\big)
\end{equation}
denote its realized action-instant and the synchronous counterpart of that instant, so that
$\mathcal{D}_t=\{e_j\}_{j=1}^{M}$ and $\bar{\mathcal{D}}_t=\{\bar e_j\}_{j=1}^{M}$. By
\cref{eq:bounded_skew}, $e_j$ and $\bar e_j$ carry the same action and timestamps differing by at
most $\rho$, for every $j$.

\subsection*{Proof of Lemma~\ref{lem:global}}

Index the agents $1,\dots,M$ in any fixed order and define the hybrid schedules
\begin{equation}
\mathcal{H}_k\;\triangleq\;\{\bar e_j\}_{j\le k}\cup\{e_j\}_{j>k},
\qquad k=0,\dots,M,
\end{equation}
so that $\mathcal{H}_0=\mathcal{D}_t$ and $\mathcal{H}_M=\bar{\mathcal{D}}_t$. Consecutive
hybrids differ in exactly one element, that of agent $k$. Writing
\begin{equation}
S_k\;\triangleq\;\{\bar e_j\}_{j<k}\cup\{e_j\}_{j>k}
\end{equation}
for their common part, we have $\mathcal{H}_{k-1}=S_k\cup\{e_k\}$ and
$\mathcal{H}_k=S_k\cup\{\bar e_k\}$, so the term $F(S_k)$ cancels and
\begin{equation}
\label{eq:hybridstep}
F(\mathcal{H}_{k-1})-F(\mathcal{H}_k)
=F\big(e_k\mid S_k\big)-F\big(\bar e_k\mid S_k\big).
\end{equation}
The right-hand side compares two marginals under a common context whose added elements
differ only in their timestamp, by at most $\rho$; it is therefore bounded in absolute value
by $\mathrm{osc}_1(\rho)$ by~\eqref{eq:osc1}. Telescoping over $k$ and applying the triangle
inequality,
\begin{align}
\big|F(\mathcal{D}_t)-F(\bar{\mathcal{D}}_t)\big|
&=\Big|\sum_{k=1}^{M}\big[F(\mathcal{H}_{k-1})-F(\mathcal{H}_k)\big]\Big|\notag\\
&\le M\,\mathrm{osc}_1(\rho).
\end{align}

\subsection*{Proof of Lemma~\ref{lem:marginal}}

Fix $a\in\mathcal{V}_i$ and write
\begin{equation}
e\triangleq(a,\tau_i(t)),\quad
\bar e\triangleq(a,\tau_{\mathrm{ref}}),
\end{equation}
\begin{equation}
S\triangleq\{e_j\}_{j\in\mathcal{N}_i},\quad
\bar S\triangleq\{\bar e_j\}_{j\in\mathcal{N}_i},
\end{equation}
so that $r^{\mathrm{async}}_{i,t}(a)=F(e\mid S)$ and
$r^{\mathrm{sync}}_{i,t}(a)=F(\bar e\mid\bar S)$. Inserting the intermediate term
$F(\bar e\mid S)$,
\begin{align}
\label{eq:margsplit}
r^{\mathrm{async}}_{i,t}(a)-r^{\mathrm{sync}}_{i,t}(a)
&=\big[F(e\mid S)-F(\bar e\mid S)\big]\notag\\
&\quad+\big[F(\bar e\mid S)-F(\bar e\mid\bar S)\big].
\end{align}
The first bracket compares two marginals of the same context $S$ whose added elements differ
only in timestamp, with $|\tau_i(t)-\tau_{\mathrm{ref}}|\le\rho$. It is
bounded by $\mathrm{osc}_1(\rho)$ by~\eqref{eq:osc1}. The second compares the marginal of the
fixed element $\bar e$ against two contexts satisfying $S\sim_\rho\bar S$, both of size
exactly $|\mathcal{N}_i|$. It is bounded by $\mathrm{osc}_2(\rho;|\mathcal{N}_i|)$
by~\eqref{eq:osc2}. The triangle inequality applied to~\eqref{eq:margsplit} gives
\begin{equation}
\big|r^{\mathrm{async}}_{i,t}(a)-r^{\mathrm{sync}}_{i,t}(a)\big|
\le\mathrm{osc}_1(\rho)+\mathrm{osc}_2\big(\rho;|\mathcal{N}_i|\big).
\end{equation}



\subsection*{Proof of Lemma~\ref{lem:conversion}}

Write $G\triangleq\sum_{t=1}^{T}\gamma_{i,t}$ and let
\begin{equation}
a^\star\in\arg\max_{a\in\mathcal{V}_i}\ \sum_{t=1}^{T}r^{\mathrm{sync}}_{i,t}(a).
\end{equation}
Since $\gamma_{i,t}$ bounds the discrepancy uniformly over $\mathcal{V}_i$, it applies at
$a^\star$, giving
\begin{align}
\sum_{t}r^{\mathrm{sync}}_{i,t}(a^\star)
&\le\sum_{t}r^{\mathrm{async}}_{i,t}(a^\star)+G\notag\\
&\le\max_{a\in\mathcal{V}_i}\sum_{t}r^{\mathrm{async}}_{i,t}(a)+G .
\end{align}
It applies equally at the played action $a_{i,t}\in\mathcal{V}_i$, giving
\begin{equation}
-\sum_{t}r^{\mathrm{sync}}_{i,t}(a_{i,t})
\le-\sum_{t}r^{\mathrm{async}}_{i,t}(a_{i,t})+G .
\end{equation}
Adding the two displays yields
$\mathrm{Reg}^{\mathrm{sync}}_T\le\mathrm{Reg}^{\mathrm{async}}_T+2G$, and $G\le T\gamma$.
The two uses are in opposite directions, which is why $\gamma_{i,t}$ is defined through an
absolute value and why the factor $2$ appears.

\subsection*{Proof of \Cref{thm:async}}

The chain establishing \Cref{thm:approx_performance} reaches \cref{aux22:7} and
rearranges to a bound on $\mathbb{E}[f_t(\mathcal{A}_t)]$ in which the per-agent regret
enters as
\begin{equation}
\frac{1}{T(1+\kappa_f)}\sum_{i\in\mathcal{N}}\mathrm{Reg}_T ,
\end{equation}
with $\mathrm{Reg}_T$ that of \cref{eq:static_regret_agent_i}, i.e.\
$\mathrm{Reg}^{\mathrm{sync}}_T$, whereas Theorem~\ref{thm:per_agent} bounds
$\mathrm{Reg}^{\mathrm{async}}_T$. Applying Lemma~\ref{lem:conversion} for each agent and
then Lemma~\ref{lem:marginal},
\begin{align}
\frac{1}{T(1+\kappa_f)}\sum_{i\in\mathcal{N}}2T\gamma_i
&\le\frac{2}{1+\kappa_f}\sum_{i\in\mathcal{N}}
\Big[\mathrm{osc}_1(\rho)\notag\\
&\qquad\qquad+\mathrm{osc}_2\big(\rho;|\mathcal{N}_i|\big)\Big]\notag\\
&\le\frac{2|\mathcal{N}|}{1+\kappa_f}
\big(\mathrm{osc}_1+\mathrm{osc}_2\big)(\rho).
\end{align}
This yields a guarantee on $\mathbb{E}[f_t(\mathcal{A}_t)]$, the synchronous value of the
played actions. The quantity earned by the system is $F(\mathcal{D}_t)$, so applying
Lemma~\ref{lem:global} once more,
\begin{equation}
\mathbb{E}\big[F(\mathcal{D}_t)\big]
\ge\mathbb{E}\big[f_t(\mathcal{A}_t)\big]-|\mathcal{N}|\,\mathrm{osc}_1(\rho),
\end{equation}
and~\eqref{eq:Gamma} follows. The final inequality of~\eqref{eq:Gamma} uses
$\kappa_f\in[0,1]$. In the fully decentralized case $\mathcal{N}_i=\varnothing$, so
$\mathrm{osc}_2(\rho;0)=0$ and the constant improves. \qed

\end{document}